\titleformat*{\subsection}{\normalsize\bfseries}
\def\thanks#1{\protected@xdef\@thanks{\@thanks
		\protect\footnotetext{#1}}}
\newcommand{\subjclass}[2][2020]{%
	\let\@oldtitle\@title%
	\gdef\@title{\@oldtitle\footnotetext{#1 \emph{Mathematics subject classification:} #2}}%
}
\newcommand{\keywords}[1]{%
	\let\@@oldtitle\@title%
	\gdef\@title{\@@oldtitle\footnotetext{\emph{Key words and phrases:} #1.}}%
}
\DeclareSymbolFont{largesymbolsstix}{LS2}{stixex}{m}{n}
\DeclareMathDelimiter{\lbrbrak}{\mathopen}{largesymbolsstix}{"EE}{largesymbolsstix}{"14}
\DeclareMathDelimiter{\rbrbrak}{\mathclose}{largesymbolsstix}{"EF}{largesymbolsstix}{"15}
\newcommand{\nocontentsline}[3]{}
\newcommand{\tocless}[2]{\bgroup\let\addcontentsline=\nocontentsline#1{#2}\egroup}
\newcommand{\ubar}[1]{\mkern2mu\underline{\mkern-2mu #1\mkern-2mu}\mkern2mu}
\theoremstyle{definition}
\newtheorem{defin}{Definition}[section]
\newtheorem{rem}[defin]{Remark}
\newtheorem{exs}[defin]{Examples}
\theoremstyle{plain}
\newtheorem{theor}[defin]{Theorem}
\newtheorem{lem}[defin]{Lemma}
\newtheorem{prop}[defin]{Proposition}
\newtheoremstyle{dotless-thm}
{3pt}
{3pt}
{}
{}
{}
{.}
{.5em}
{}
\theoremstyle{dotless-thm}
\def\N{{\mathbb{N}}}
\def\im{{\mbox{im}}}
\def\A{{\mathcal{A}}}
\def\B{{\mathcal{B}}}
\def\T{{\mathcal{T}}}
\def\Q{{\mathcal{Q}}}
\def\R{{\mathcal{R}}}
\author{Thomas Kahl\thanks{This research was partially supported by FCT (\emph{Fundação para a Ciência e a Tecnologia}, Portugal) through projects UIDB/00013/2020 and UIDP/00013/2020.} }
\affil{\small{Centro de Matem\'atica,
	Universidade do Minho,\\ Campus de Gualtar,
	4710-057 Braga,
	Portugal\\
\texttt{kahl@math.uminho.pt}}
}
\begin{document}

\title{On the homology language of HDA models of transition systems}

\date{}

\subjclass{55N35, 68Q85}


\keywords{Higher-dimensional automata, transition system, homology language}

\maketitle 

\begin{abstract}
	
\noindent Given a transition system with an independence relation on the alphabet of labels, one can  associate with it a usually very large symmetric higher-dimensional automaton. The purpose of this paper is to show that by choosing an acyclic relation whose symmetric closure is the given independence relation, it is possible to construct a much smaller nonsymmetric HDA with the same homology language. 	
\end{abstract}


\section*{Introduction}

Higher-dimensional automata are a powerful  combinatorial-topological model for concurrent systems. A higher-dimensional automaton (HDA) is a precubical set (i.e., a cubical set without degeneracies) with an initial state, a set of final states, and a labeling on 1-cubes such that opposite edges of 2-cubes have the same label \cite{vanGlabbeek, Pratt}. 
An HDA is thus a (labeled) transition system (or an ordinary automaton) with a supplementary structure consisting of two- and higher-dimensional cubes. The transition system represents the states and transitions of a concurrent system. An \(n\)-cube in an HDA indicates that the \(n\) transitions starting at its origin are independent in the sense that they may occur in any order, or even simultaneously, without any observable difference. It has been shown in \cite{vanGlabbeek} that higher-dimensional automata are more expressive than the principal traditional models of concurrency. 

The fact that the two- and higher-dimensional cubes in an HDA represent independence of transitions suggests that the higher-dimensional topology of an HDA contains global information on independence of processes and components of the modeled concurrent system. In \cite{wehda}, the homology language has been devised as a homological tool to describe this global independence structure of an HDA. The homology language of an HDA is defined to be the image of the homomorphism induced in homology by a certain labeling chain map that leads from the cubical chain complex of the HDA to the exterior algebra on the alphabet of labels (see Section \ref{secHL}). The homology language can be computed and analyzed using software \cite{pg2hda}.

Transition systems are arguably the most fundamental model for concurrent systems. A natural way to turn a transition system into an HDA is to fill in empty squares and higher-dimensional cubes \cite{vanGlabbeek, GaucherCombinatorics, GoubaultMimram, transhda}. This approach requires some concept of independence in order to decide which cubes to fill in. Such a concept of independence may be given by an independence relation---i.e., an irreflexive and symmetric relation---on the alphabet of action labels. Independence relations play a fundamental role in trace theory in the sense of Mazurkiewicz \cite{Mazurkiewicz}. Asynchronous transition systems \cite{WinskelNielsen} are an important example of transition systems that come equipped with an independence relation on the set of labels. 

HDAs constructed from transition systems using a cube filling procedure based on a symmetric concept of independence, such as an independence relation on the set of labels, are usually very large because an empty \(n\mbox{-}\)cube representing the execution of \(n\) independent actions is filled in \(n!\) times. From a computational point of view, it is therefore desirable to have a way to produce smaller HDA models of transition systems.	It has been shown in \cite{transhda} that using a cube filling rule based on an asymmetric rather than an independence relation on the alphabet, one can construct an HDA model of a transition system where the independence of \(n\) actions in a state is represented by a single \(n\mbox{-}\)cube (at least if the transition system under consideration is deterministic). It is, of course, not to be expected that two HDAs constructed using different filling rules from the same  transition system will be equivalent in a meaningful sense. 

Every independence relation on an alphabet is the symmetric closure of an acyclic relation, i.e., a relation that is acyclic when seen as a graph (see Section \ref{secHM}). The purpose of this paper is to establish that the HDAs constructed from a transition system using cube filling rules based, respectively, on an independence relation on the alphabet and a generating acyclic relation have the same homology language. The two HDAs may therefore be regarded as equivalent from the point of view of their independence structures. As we will show in Section \ref{secMain}, the HDA constructed using the independence relation is the free symmetric HDA generated by the one constructed using the acyclic relation. The latter is thus a significantly smaller HDA model of the given transition system than the former.

\section{Precubical sets and HDAs} \label{SecPrel}
\begin{sloppypar}
This section presents fundamental material on precubical sets,  higher-dimensional automata, and their symmetric variants.
\end{sloppypar}


\subsection*{Precubical sets} \label{precubs}

A \emph{precubical set} is a graded set \(P = (P_n)_{n \geq 0}\) with \emph{face maps} \[d^k_i\colon P_n \to P_{n-1}\quad (n>0,\;k= 0,1,\; i = 1, \dots, n)\] satisfying the \emph{cubical identities} \[d^k_i d^l_{j}= d^l_{j-1} d^k_i \quad (k,l = 0,1,\; i<j).\] If \(x\in P_n\), we say that \(x\) is of  \emph{degree} or \emph{dimension} \(n\). The elements of degree \(n\) are called the \emph{\(n\)-cubes} of \(P\). The elements of degree \(0\) are also called the \emph{vertices} of \(P\), and the \(1\)-cubes are also called the \emph{edges} of \(P\). 
The \emph{\(i\)th starting edge} of a cube \(x\) of degree \(n > 0\) is the edge \[e_ix = d_1^{0} \cdots d_{i-1}^ {0}d_{i+1}^{0}\cdots d_n^{0}x.\]

A \emph{precubical subset} of a precubical set is a graded subset that is stable under the face maps. The \emph{\(n\)-skeleton} of a precubical set \(P\) is the precubical subset \(P_{\leq n}\) defined by \((P_{\leq n})_m = P_m\) for \(m \leq n\) and \(P_m = \emptyset\) else.

A \emph{morphism} of precubical sets is a morphism of graded sets that is compatible with the face maps. The category of precubical sets can be seen as the presheaf category of functors \(\square^{\textsf{op}} \to {\mathsf{Set}}\) where \(\square\) is the small subcategory of the category of topological spaces whose objects are the standard \(n\)-cubes \([0,1]^n\) \((n \geq 0)\) and whose nonidentity morphisms are composites of the \emph{coface maps} \(\delta^k_i\colon [0,1]^n\to [0,1]^{n+1}\) (\(k \in \{0,1\}\), \(n \geq 0\), \(i \in  \{1, \dots, n+1\}\)) given by \(\delta_i^k(u_1,\dots, u_n)= (u_1,\dots, u_{i-1},k,u_i \dots, u_n)\). 

The \emph{geometric realization} of a precubical set \(P\) is the quotient space \[|P|=\left(\coprod _{n \geq 0} P_n \times [0,1]^n\right)/\sim\] where the sets \(P_n\) are given the discrete topology and the equivalence relation is generated by
\[(d^k_ix,u) \sim (x,\delta_i^k(u)), \quad  x \in P_{n+1},\; u\in [0,1]^n,\; i \in  \{1, \dots, n+1\},\; k \in \{0,1\}.\] 
The geometric realization of a morphism of precubical sets \({f\colon P \to Q}\) is the continuous map \({|f|\colon |P| \to |Q|}\) given by \(|f|([x,u])= [f(x),u]\).

\subsection*{The precubical set of permutations}

The family of symmetric groups \(S = (S_n)_{n\geq 0}\) (with \(S_0 = \{id_\emptyset\}\)) is a precubical set with respect to the face maps given by 
\begin{align*}
	d^{k}_i\theta(j) &= \left \{ \begin{array}{ll}
		\theta(j), & j < \theta^{-1}(i),\; \theta(j) < i,\\
		\theta(j) - 1, & j < \theta^{-1}(i),\; \theta(j) > i,\\
		\theta(j+1), & j \geq \theta^{-1}(i),\; \theta(j+1) < i,\\
		\theta(j+1)-1,& j \geq \theta^{-1}(i),\; \theta(j+1) > i
	\end{array}\right.
\end{align*}
(see \cite{SHDA} and compare \cite{Krasauskas, FiedorowiczLoday}). Since, by definition, \(d^0_i\theta = d^1_i\theta\), we may simplify the notation by setting 
\[d^{\iffalse \bullet \fi}_i\theta = d^0_i\theta = d^1_i\theta.\]

Since \(S_1 = \{id\}\), the two elements of \(S_2\) have the same faces. In degrees \(\geq 3\), permutations are determined by their faces: 

\begin{prop}\label{ddeterminism}
	Let \(n \geq 3\), and let \(\sigma, \theta \in S_n\) such that \(d_i\sigma = d_i\theta\) for all \(i \in \{1, \dots, n\}\). Then \(\sigma = \theta\).	
\end{prop}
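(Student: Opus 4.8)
The plan is to reinterpret the face operator combinatorially and then reconstruct a permutation from the collection of its faces. Writing a permutation \(\theta \in S_n\) as the word \(\theta(1)\theta(2)\cdots\theta(n)\), I would first check, directly from the defining formula, that \(d_i\theta\) is exactly the word obtained by \emph{deleting the letter \(i\)} (which sits in position \(\theta^{-1}(i)\)) and then \emph{standardizing}, i.e. decreasing every surviving letter greater than \(i\) by one so that the result is again a word on \(\{1,\dots,n-1\}\). The four cases of the formula correspond precisely to the two positions of \(j\) relative to the deleted position \(\theta^{-1}(i)\), combined with whether the value lies below or above \(i\).

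The second ingredient is a reconstruction principle. A permutation \(\theta\) is completely determined by the relative order in which its values occur, that is, by knowing for each pair of distinct values \(a,b\) whether \(\theta^{-1}(a) < \theta^{-1}(b)\); this data simply recovers the word \(\theta(1)\cdots\theta(n)\). The point is that deleting one letter and standardizing preserves the relative order of all the surviving values: for \(a,b \neq i\), the value \(a\) precedes \(b\) in \(\theta\) if and only if the image of \(a\) precedes the image of \(b\) in \(d_i\theta\). Hence \(d_i\theta\) records the relative order of every pair of values different from \(i\).

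Finally I would combine these. Given \(\sigma,\theta \in S_n\) with \(d_i\sigma = d_i\theta\) for all \(i\), fix an arbitrary pair of distinct values \(a,b\). Because \(n \geq 3\) there is an index \(i \notin \{a,b\}\), and since \(d_i\sigma = d_i\theta\), the relative order of \(a\) and \(b\) is the same in \(\sigma\) as in \(\theta\). As the pair was arbitrary, \(\sigma\) and \(\theta\) induce the same order on values and are therefore equal. The main obstacle is the first step: translating the piecewise formula into the clean delete-and-standardize description and verifying the order-preservation claim through its case analysis. The hypothesis \(n \geq 3\) enters only at the very end, through the pigeonhole choice of a spare index \(i \notin \{a,b\}\); its necessity is visible from the fact that the two elements of \(S_2\) share the same (\(S_1\)-valued) faces.
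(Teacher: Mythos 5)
Your proof is correct, but it takes a genuinely different route from the paper's. Your combinatorial reading of the face map is accurate: the four cases of the defining formula say precisely that \(d_i\theta\) is the word obtained from \(\theta(1)\cdots\theta(n)\) by deleting the letter \(i\) at position \(\theta^{-1}(i)\) and then decreasing every surviving letter greater than \(i\) by one, the cases splitting according to whether \(j\) lies before or after the deleted position and whether the value lies below or above \(i\). From there your reconstruction argument is sound: deletion preserves the relative order of the surviving letters and standardization only renames values in place, and since the renaming \(a \mapsto a\) or \(a-1\) depends only on \(i\) and not on the permutation, the equality \(d_i\sigma = d_i\theta\) forces any pair of values \(a,b \neq i\) to occur in the same relative order in \(\sigma\) and \(\theta\); with \(n \geq 3\) every pair \(\{a,b\}\) admits a spare index \(i \notin \{a,b\}\), and a permutation is determined by the relative order of all pairs of its values. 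The paper argues quite differently: it invokes Lemma 3.3 of \cite{SHDA}, which reduces the claim to exhibiting a single position \(r\) with \(\sigma(r) = \theta(r)\), and then derives a contradiction from the assumption that no such \(r\) exists, through a delicate case analysis of the positions \(i = \sigma^{-1}(n)\) and \(j = \theta^{-1}(n)\) (first showing \(i, j < n\), then \(i < j\) and \(j < i\) simultaneously). Your approach buys self-containedness and conceptual clarity---it avoids the external lemma, makes transparent exactly where \(n \geq 3\) enters and why \(n = 2\) fails (matching the paper's remark that the two elements of \(S_2\) share the same faces), and in fact shows that \(\theta\) is reconstructible from any three of its faces---whereas the paper's proof buys brevity by reusing machinery already developed in \cite{SHDA}, at the cost of a more opaque contradiction argument.
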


\begin{proof}
	It follows from \cite[Lemma 3.3]{SHDA} that it is enough to show that there exists an \(r\) such that \(\sigma(r) = \theta(r)\). Suppose that this is not the case. Set \(i = \sigma^{-1}(n)\) and \(j = \theta^{-1}(n)\). Then \(i, j < n\). Indeed, suppose that \(i = n\). Then for \(1 \leq r \leq n-1\), \(d_n\sigma(r) = \sigma(r)\) because \(r < n = \sigma^{-1}(n)\), \(\sigma (r) \leq n\), and \(\sigma(r) \not= \sigma(n) = n\). Since \(\theta(r) \not= \sigma(r)\) and \(\theta (r), \theta(r+1) \leq n\), we have \(r \geq \theta^{-1}(n)\), \(\theta(r+1) < n\), and \[\sigma(r) = d_n\sigma(r) = d_n \theta(r) = \theta(r+1).\]
	In particular, \(j = \theta^{-1}(n) \leq 1\). Hence \(j = 1\) and \(\theta(1) = n\). Set \(s = \sigma^{-1}(1)\). Since \(\sigma (s) \not= \sigma (n)\), \(s < n\). Since \(\theta(s+1) = \sigma(s) = 1\), \(\theta^{-1}(1) = s+1\). Since the values of \(\sigma\) are \(\geq 1\), we have
	\[
	d_1\sigma(1) = \left \{ \begin{array}{ll}
		\sigma(1) -1, & 1 < \sigma^{-1}(1) = s,\\
		\sigma(2) -1, & 1 = s.
	\end{array} \right.
	\] 
	Since \(n > 2\), we have \(\sigma (1), \sigma(2) \not= \sigma (n) = n\) and therefore \(d_1\sigma(1) \leq n-2\). On the other hand, \(1 < s+1 = \theta^{-1}(1)\) and \(\theta(1) = n > 1\) and therefore \[n -2 \geq  d_1\sigma(1) = d_1\theta(1) = \theta(1) -1 = n -1,\] which is impossible. It follows that \(i < n\). An analogous argument shows that \(j < n\). 
	
	Hence \( 1 \leq i, j \leq n-1\). We show that \(i < j\). Since \(i \geq i = \sigma^{-1}(n)\), \(\sigma(i+1) \leq n\), and \(\sigma(i+1) \not= \sigma(i) = n\), we have \(d_n\sigma(i) = \sigma(i+1)\). Hence \(d_n\theta(i) = \sigma(i+1) \not= \theta(i+1)\). Since \(\theta(i+1) \leq n\), it follows that \(i < j\). Analogously, \(j < i\). Since this is impossible, there must exist an \(r\) such that \(\sigma(r) = \theta(r)\). 
\end{proof}

For later use, we state the following fact from \cite{SHDA}:

\begin{prop} {\rm{ \cite[Prop. 3.6(i)]{SHDA}}} \label{prop36}
	Let \(P\) be a precubical set, and let \(n \geq 2\), \(1 \leq i < j \leq n\), \(k,l \in \{0,1\}\), \(x \in P_n\), and \(\theta \in S_n\). Then
	 \[d^k_{d^{\iffalse \bullet \fi}_{\theta(j)}\theta (i)}d^l_{\theta(j)}x = d^l_{d^{\iffalse \bullet \fi}_{\theta(i)}\theta(j-1)}d^k_{\theta(i)}x.\] 	
\end{prop}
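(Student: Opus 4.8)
The plan is to reduce this identity to the cubical identities of $P$ by evaluating the two permutation-face indices explicitly and then splitting into cases according to the relative order of $\theta(i)$ and $\theta(j)$. Write $a = \theta(i)$ and $b = \theta(j)$; since $\theta$ is a bijection and $i \neq j$, we have $a \neq b$. First I would compute the outer index on the left, $d_{\theta(j)}\theta(i)$, from the definition of the face maps on $S$, with deletion value $b = \theta(j)$ and evaluation position $i$. Because $\theta^{-1}(b) = j$ and $i < j$, only the first two cases of the definition apply, and one reads off
\begin{align*}
  d_{\theta(j)}\theta(i) = \left\{\begin{array}{ll} a, & a < b,\\ a - 1, & a > b.\end{array}\right.
\end{align*}
Symmetrically, the outer index on the right, $d_{\theta(i)}\theta(j-1)$, uses deletion value $a = \theta(i)$ and evaluation position $j-1$. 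Here $\theta^{-1}(a) = i \leq j-1$, so only the last two cases apply, and since $\theta((j-1)+1) = \theta(j) = b$ one obtains
\begin{align*}
  d_{\theta(i)}\theta(j-1) = \left\{\begin{array}{ll} b, & b < a,\\ b - 1, & b > a.\end{array}\right.
\end{align*}
In passing one checks that all indices appearing lie in the admissible ranges, which is immediate from $1 \leq i < j \leq n$.

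With these evaluations in hand, I would treat the two orderings of $a$ and $b$ separately. If $a < b$, the claimed identity becomes $d^k_a d^l_b x = d^l_{b-1} d^k_a x$, which is exactly the cubical identity for the pair $a < b$. If $a > b$, it becomes $d^k_{a-1} d^l_b x = d^l_b d^k_a x$; applying the cubical identity to the right-hand side for the pair $b < a$ yields $d^l_b d^k_a x = d^k_{a-1} d^l_b x$, so the two sides coincide. In either case the statement follows, and this exhausts the possibilities since $a \neq b$.

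I do not expect a genuine obstacle here: once the two face indices of $\theta$ are evaluated, each case is a single invocation of the cubical identities. The only point requiring care is the bookkeeping of the unit index shifts ($a$ versus $a-1$, and $b$ versus $b-1$) produced by the permutation face maps, and verifying that in each case these shifts are precisely the ones demanded by the cubical identity. The symmetry between the roles of $i,j$ (equivalently $a,b$) on the two sides of the equation is what makes the two cases mirror images of one another, so effectively only one computation needs to be carried out in full.
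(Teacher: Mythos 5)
Your proof is correct. Note that the paper itself gives no argument for this proposition---it is imported by citation from \cite{SHDA} (Prop.\ 3.6(i))---so there is no in-paper proof to compare against; your self-contained verification is exactly the natural one. The two key evaluations check out: since $i<\theta^{-1}(b)=j$, the position $i$ falls under the first two clauses of the defining formula for $d_b\theta$, giving $d_b\theta(i)=a$ if $a<b$ and $a-1$ if $a>b$; and since $j-1\geq\theta^{-1}(a)=i$, the position $j-1$ falls under the last two clauses for $d_a\theta$, giving $d_a\theta(j-1)=b$ if $b<a$ and $b-1$ if $b>a$. Each case then is indeed a single instance of the cubical identity $d^k_pd^l_q=d^l_{q-1}d^k_p$ ($p<q$), applied with $(p,q)=(a,b)$ when $a<b$ and with $(p,q)=(b,a)$ when $b<a$, and your index ranges are all admissible since $1\leq a,b\leq n$ with $a\neq b$.
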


\subsection*{Symmetric precubical sets}

A \emph{symmetric precubical set} is a precubical set \(P\) equipped with a \emph{crossed action} of \(S\) on \(P\), i.e.,  a morphism of graded sets \(S \times P \xrightarrow{} P\), \((\theta, x) \mapsto \theta\cdot x\) such that
\begin{itemize}
	\item for all \(n \geq 0\) and \(x \in P_n\), \(id\cdot x = x\);
	\item for all \(n \geq 0\), \(\sigma, \theta \in S_n\), and \(x \in P_n\), \((\sigma \cdot\theta)\cdot x = \sigma \cdot(\theta \cdot x)\);
	\item for all \(n \geq 1\), \(\theta \in S_n\), \(x \in P_n\), \(i \in \{1, \dots, n\}\), and \(k \in \{0,1\}\),  \[d^k_i(\theta \cdot x) = d^{\iffalse \bullet \fi}_i\theta\cdot d^k_{\theta^{-1}(i)}x.\]
\end{itemize} 
Symmetric precubical sets form a category, in which the morphisms are morphisms of precubical sets that are compatible with the crossed actions. We remark that the category of symmetric precubical sets is isomorphic to the presheaf category \(\mathsf{Set}^{\square_S^{\mathsf{op}}}\) where \(\square_S\) is the subcategory of the category of topological spaces whose objects are the standard \(n\)-cubes \([0,1]^n\) \((n \geq 0)\) and whose morphisms are composites of the coface maps \(\delta^k_i\) defined above and the \emph{permutation maps}  
\[t_\theta \colon [0,1]^n \to [0,1]^n, (u_1, \dots , u_n) \mapsto  (u_{\theta(1)} \dots, u_{\theta(n)}) \quad (n \geq 0,\, \theta \in S_n)\] 
(cf. \cite{GaucherCombinatorics, GoubaultMimram}). 

The \emph{free symmetric precubical set} generated by a precubical set \(P\) is the symmetric precubical set \(SP\) defined by 
\begin{itemize}
	\item \(SP_n = S_n \times P_n\) \((n \geq 0)\);
	\item \(d^k_i(\theta,x) = (d^{\iffalse \bullet \fi}_i\theta, d^k_{\theta^{-1}(i)}x)\) \(({n \geq 1}, {\theta \in S_n}, {x \in P_n}, {1 \leq i \leq n}, {k \in \{0,1\}})\); 
	\item \(\sigma \cdot (\theta, x) = (\sigma \cdot \theta, x)\) \((n \geq 0, \sigma, \theta \in S_n, x \in P_n)\).
\end{itemize}
The free symmetric precubical set is functorial, and the functor \(P \mapsto SP\) from the category of precubical sets to the category of symmetric precubical sets is left adjoint to the forgetful functor.

\subsection*{Higher-dimensional automata} \label{HDAdef}

Throughout this paper, we consider a fixed alphabet \(\Sigma\). 
A \emph{higher-di\-mensional automaton (HDA)} over \(\Sigma\) is a tuple \[\Q = (P, \imath , F, \lambda)\] where  \(P\) is a precubical set, \({\imath \in P_0}\) is a vertex, called the \emph{initial state}, \({F \subseteq P_0}\) is a (possibly empty) set of \emph{final states}, and \(\lambda \colon P_1 \to \Sigma\) is a map, called the \emph{labeling function}, such that  \(\lambda (d_i^0x) = \lambda (d_i^1x)\) for all \(x \in P_2\) and \(i \in \{1,2\}\) \cite{vanGlabbeek}. We say that an HDA \(\Q' = (P', \imath', F', \lambda')\) is a \emph{sub-HDA} of \(\Q\) and write \(\Q' \subseteq \Q\) if \(P'\) is a precubical subset of \(P\), \(\imath' = \imath\), \(F' = F\cap Q'_0\), and \(\lambda' = \lambda|_{Q'_1}\). The \emph{\(n\)-skeleton} of \(\Q\) is the sub-HDA \(\Q_{\leq n} = (P_{\leq n}, \imath, F, \lambda|_{(P_{\leq n})_1})\). Higher-dimensional automata form a category, in which a morphism from an HDA \(\Q = (P, \imath, F, \lambda)\) to an HDA \(\Q' = (P', \imath', F', \lambda')\) is a morphism of precubical sets  \(f\colon P \to P'\) such that \(f(\imath) = \imath'\), \(f(F) \subseteq F'\), and  \(\lambda'(f(x)) =  \lambda(x)\) for all \(x \in P_1\). 

A \emph{symmetric HDA} is an HDA \(\Q = (P, \imath, F, \lambda)\) equipped with a crossed action of \(S\) on \(P\). Symmetric HDAs form a category, in which the morphisms are morphisms of HDAs that also are  morphisms of symmetric precubical sets. The \emph{free symmetric HDA} generated by an HDA \(\Q = (P, \imath, F, \lambda)\) is the symmetric HDA \(S\Q = (SP, (id, \imath), S_0\times F, \mu)\) where  \(\mu(id,x) = \lambda (x)\) \((x \in P_1)\) and the crossed action is the one of \(SP\). The assignment \(\Q \mapsto S\Q\) defines a functor from the category of HDAs to the category of symmetric HDAs, which is left adjoint to the forgetful functor.

\section{The homology language and free symmetric HDAs} \label{secHL}

In this section, we define the homology language of an HDA and show that an HDA and the free symmetric HDA generated by it have the same homology language. We work over a fixed principal ideal domain, which we suppress from the notation.

\subsection*{Cubical chains and cubical homology} The \emph{cubical chain complex} of a precubical set \(P\) is the  nonnegative chain complex \(C_*(P)\) where \(C_n(P)\) is the free module generated by \(P_n\) and the boundary operator \(d\colon C_n(P) \to C_{n-1}(P)\) is given by \[dx = \sum \limits_{i=1}^{n}(-1)^i(d^0_ix -d^1_ix), \quad x \in P_n\;\; (n > 0).\]  
The \emph{cubical homology} of \(P\), denoted  by \(H_*(P)\), is the homology of \(C_*(P)\).

\subsection*{The homology language of an HDA}

Let \(\Q = (P, \imath, F, \lambda)\) be an HDA over \(\Sigma\). Consider the exterior algebra on the free module generated by \(\Sigma\), \(\Lambda (\Sigma)\). Recall that this is the quotient of the tensor algebra on the free module on \(\Sigma\) by the two-sided ideal generated by all elements of the form \(x\otimes x\) where \(x\) runs through the free module on \(\Sigma\) (see \cite{BourbakiAlgI} for more details). The exterior algebra \(\Lambda (\Sigma)\) is canonically graded by the exterior powers of the free module generated by \(\Sigma\). We view the graded module \(\Lambda(\Sigma)\) as a chain complex with \(d=0\) and define the \emph{labeling chain map} \[\mathfrak{l} \colon C_*(P) \to \Lambda (\Sigma)\] on basis elements \(x \in P_n\) by 
\[\mathfrak{l}(x) = \left \{\begin{array}{ll}
	1_{\Lambda(\Sigma)}, & n = 0, \vspace{0.2cm}\\
	\lambda(e_1x)\wedge \dots \wedge  \lambda(e_nx),&  n > 0.
\end{array} \right.\]
By \cite[Prop. 4.4.5]{labels}, the labeling chain map is indeed a chain map. We may therefore define the \emph{homology language} of \(\A\) to be the graded module
\[HL(\Q) = \im (\mathfrak{l}_*\colon H_*(P) \to H_*(\Lambda(\Sigma)) = \Lambda(\Sigma)).\]
Some fundamental properties of the homology language have been established in \cite{wehda}.

\subsection*{Simple cubical dimaps}

	A \emph{simple cubical dimap} from a precubical set \(P\) to a precubical set \(P'\) is a continuous map \({f\colon |P| \to |P'|}\) such that for all \(n \geq 0\) and \(x \in P_n\), there exist elements \(y \in P'_n\) and \(\theta \in S_n\) such that for all \(u \in [0,1]^n\),
	\[f([x,u]) = [y, t_\theta(u)].\]
	By \cite[Prop. 6.2.4]{labels}, \(y\) and \(\theta\) are uniquely determined by \(f\) and \(x\). We may therefore slightly abuse notation and write \(f(x)\) to denote \(y\). Obviously, the geometric realization of a morphism of precubical sets is a simple cubical dimap. In \cite{labels}, a more general concept of cubical dimap has been defined, hence the adjective \emph{simple}.
	
	A \emph{simple cubical dimap} from an HDA \(\Q = (P, \imath, F, \lambda)\) to an HDA \(\Q' = (P', \imath', F', \lambda')\) is a simple cubical dimap of precubical sets \(f\colon P \to P'\) that preserves the initial and the final states and that satisfies \(\lambda'(f(x)) =  \lambda(x)\) for all \(x \in P_1\). Our interest in simple cubical dimaps is motivated by the following fact: 
	
	\begin{prop} \label{HLinc} {\rm{\cite[Prop. 5.7.2]{wehda}}} 
		Let \(\Q\) and \(\Q'\) be HDAs such that there exists a simple cubical dimap \(\Q \to \Q'\). Then \(HL(\Q) \subseteq HL(\Q')\).	
	\end{prop}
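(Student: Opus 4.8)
The plan is to deduce the inclusion of homology languages from a compatibility between the two labeling chain maps at the level of cubical chains. A simple cubical dimap \(f\colon \Q \to \Q'\) is not a morphism of precubical sets, but it assigns to each \(x \in P_n\) a cube \(f(x) \in P'_n\) together with a permutation \(\theta_x \in S_n\) such that \(f([x,u]) = [f(x), t_{\theta_x}(u)]\). First I would record how \(f\) interacts with faces. A direct computation gives the coface--permutation relation \(t_\theta \circ \delta^k_i = \delta^k_{\theta^{-1}(i)} \circ t_{d_i\theta}\) (this is precisely what makes the face-map formula for \(S\) correct), and feeding it into the defining property of \(f\) yields
\[f(d^k_i x) = d^k_{\theta_x^{-1}(i)} f(x), \qquad \theta_{d^k_i x} = d_i \theta_x.\]

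Next I would introduce a candidate chain map \(\phi\colon C_*(P) \to C_*(P')\), defined on generators by \(\phi(x) = \operatorname{sgn}(\theta_x)\, f(x)\). The sign is forced by the labeling data: tracking the \(i\)th starting edge geometrically (the edge where only \(u_i\) varies maps under \(t_{\theta_x}\) to the edge where only coordinate \(\theta_x^{-1}(i)\) varies) shows that \(f(e_i x) = e_{\theta_x^{-1}(i)} f(x)\), whence label-preservation of \(f\) on \(1\)-cubes gives \(\lambda(e_i x) = \lambda'(e_{\theta_x^{-1}(i)} f(x))\). Substituting into the definition of \(\mathfrak{l}\) and reordering the wedge product \(\lambda'(e_{\theta_x^{-1}(1)}f(x))\wedge\cdots\wedge\lambda'(e_{\theta_x^{-1}(n)}f(x))\) into the standard order---which introduces exactly the factor \(\operatorname{sgn}(\theta_x)\)---yields \(\mathfrak{l}(x) = \operatorname{sgn}(\theta_x)\,\mathfrak{l}'(f(x))\), and therefore \(\mathfrak{l}' \circ \phi = \mathfrak{l}\).

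The main obstacle is verifying that \(\phi\) really is a chain map, i.e.\ \(\phi(dx) = d\phi(x)\). Expanding \(\phi(dx)\) with the face identities above and reindexing the resulting sum by \(j = \theta_x(i)\), the desired equality reduces, coefficient by coefficient, to the sign identity
\[\operatorname{sgn}(d_m\theta) = (-1)^{m + \theta^{-1}(m)}\operatorname{sgn}(\theta) \qquad (\theta \in S_n,\ 1 \le m \le n).\]
I expect this to be the crux. It is a cofactor-type statement: passing from \(\theta\) to \(d_m\theta\) deletes the value \(m\), which sits in position \(\theta^{-1}(m)\), from the one-line form of \(\theta\); this is exactly the deletion of a row and a column of the permutation matrix, so the signature changes by the cofactor sign \((-1)^{m+\theta^{-1}(m)}\). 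I would prove it directly from the face-map formula for \(S\) by an inversion count, or invoke it from \cite{SHDA} if it is available there.

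Finally, since \(\Lambda(\Sigma)\) carries the zero differential, the chain map \(\phi\) induces \(\phi_*\colon H_*(P) \to H_*(P')\), and the relation \(\mathfrak{l}' \circ \phi = \mathfrak{l}\) passes to homology as \(\mathfrak{l}_* = \mathfrak{l}'_* \circ \phi_*\). Taking images then gives \(HL(\Q) = \im \mathfrak{l}_* \subseteq \im \mathfrak{l}'_* = HL(\Q')\), as required.
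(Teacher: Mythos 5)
Your proof is correct, but note that this paper never proves Proposition \ref{HLinc} internally: it is imported verbatim from \cite[Prop.~5.7.2]{wehda}, so within this paper there is nothing to compare against, and your proposal in effect supplies a self-contained substitute for the citation. All the key steps check out. The coface--permutation relation you ``directly compute'' is precisely Lemma \ref{deltalem} of this paper, and feeding it into \(f([x,u]) = [f(x), t_{\theta_x}(u)]\) gives \(f([d^k_ix,u]) = [d^k_{\theta_x^{-1}(i)}f(x), t_{d_i\theta_x}(u)]\); your identities \(f(d^k_ix) = d^k_{\theta_x^{-1}(i)}f(x)\) and \(\theta_{d^k_ix} = d_i\theta_x\) then follow from the uniqueness assertion quoted from \cite[Prop.~6.2.4]{labels}---you should invoke that explicitly, since uniqueness is exactly what makes \(f(x)\) and \(\theta_x\) well defined and is not automatic (for \(n=2\) the two permutations are not distinguished by their faces, only by the geometric representation). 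Your sign identity \(\operatorname{sgn}(d_m\theta) = (-1)^{m+\theta^{-1}(m)}\operatorname{sgn}(\theta)\) is true, and the cofactor justification is sound: the face-map formula for \(S\) says \(d_m\theta\) is the permutation of the minor obtained from the permutation matrix of \(\theta\) by deleting the row of the value \(m\) and the column \(\theta^{-1}(m)\), so Laplace expansion at the entry \((m,\theta^{-1}(m))\) gives the sign. With it, the reindexing \(j=\theta_x^{-1}(i)\) does yield \(\phi(dx)=d\phi(x)\), and the same sign simultaneously gives \(\mathfrak{l}'\circ\phi=\mathfrak{l}\), since reordering the \(n\) degree-one factors by \(\theta_x^{-1}\) costs \(\operatorname{sgn}(\theta_x^{-1})=\operatorname{sgn}(\theta_x)\) (and if two starting edges share a label, both sides vanish, so the identity holds in all cases). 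The passage to homology is legitimate because \(\Lambda(\Sigma)\) carries the zero differential, and the conclusion \(HL(\Q)=\im\,\mathfrak{l}_*\subseteq\im\,\mathfrak{l}'_*=HL(\Q')\) follows. The elegant point of your construction---that a single twist by \(\operatorname{sgn}(\theta_x)\) repairs both the chain-map property and the compatibility with the labeling maps at once---is exactly what makes the statement work, and it is the natural argument given how this paper uses simple cubical dimaps (compare the untwisted case \(\theta_x = id\), where \(\phi\) is just the chain map induced by a morphism of precubical sets).
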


\subsection*{The homology language of a free symmetric HDA}

Let \(\Q = (P,\imath,F,\lambda)\) be an HDA. Since there exists a morphism of HDAs \(\Q\to S\Q\), by Proposition \ref{HLinc}, \(HL(\Q) \subseteq HL(S\Q)\). We show in Theorem \ref{symcubical} below that actually equality holds. 

\begin{lem} \label{deltalem}
	Let \(\theta \in S_n\) \((n \geq 1)\), and let \(i \in \{1, \dots ,n\}\) and \(k \in \{0,1\}\). Then  
	\[\delta^k_{\theta^{-1}(i)}\circ t_{d_i\theta} = t_\theta \circ \delta^k_i \colon [0,1]^{n-1} \to [0,1]^n.\]
\end{lem}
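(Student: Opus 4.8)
The plan is to verify the equality of the two composites by evaluating both sides on an arbitrary point and comparing coordinates. Both maps go from \([0,1]^{n-1}\) to \([0,1]^n\), so it suffices to fix \(u = (u_1, \dots, u_{n-1}) \in [0,1]^{n-1}\) and to show that the two resulting vectors agree in each of their \(n\) coordinates. Throughout I would write \(r = \theta^{-1}(i)\), since \(r\) is the position at which the coface map \(\delta^k_{\theta^{-1}(i)}\) on the left inserts the entry \(k\).

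First I would compute the right-hand side. Applying \(\delta^k_i\) to \(u\) produces the vector \(v \in [0,1]^n\) with \(v_j = u_j\) for \(j < i\), \(v_i = k\), and \(v_j = u_{j-1}\) for \(j > i\); applying \(t_\theta\) then yields the vector whose \(m\)th coordinate is \(v_{\theta(m)}\). Reading off \(v_{\theta(m)}\) according to whether \(\theta(m)\) is less than, equal to, or greater than \(i\) gives: the \(m\)th coordinate equals \(k\) when \(m = r\), equals \(u_{\theta(m)}\) when \(\theta(m) < i\), and equals \(u_{\theta(m)-1}\) when \(\theta(m) > i\).

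Next I would compute the left-hand side. Applying \(t_{d_i\theta}\) to \(u\) gives the vector \(w \in [0,1]^{n-1}\) with \(w_p = u_{(d_i\theta)(p)}\), and then \(\delta^k_r\) inserts \(k\) in position \(r\), so that the \(m\)th coordinate of the result is \(w_m = u_{(d_i\theta)(m)}\) for \(m < r\), is \(k\) for \(m = r\), and is \(w_{m-1} = u_{(d_i\theta)(m-1)}\) for \(m > r\). Substituting the defining case formula for \(d_i\theta\)---using the two \(j < \theta^{-1}(i)\) clauses when \(m < r\) and the two \(j \geq \theta^{-1}(i)\) clauses (with \(j = m-1\), so that \(\theta(j+1) = \theta(m)\)) when \(m > r\)---I would find that in every case the \(m\)th coordinate reduces to exactly \(u_{\theta(m)}\) when \(\theta(m) < i\) and to \(u_{\theta(m)-1}\) when \(\theta(m) > i\), matching the right-hand side coordinate by coordinate.

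The computation is entirely routine once the bookkeeping is set up; the one point that requires care---and the only place where anything could go wrong---is the index shift built into the coface and face maps. The clause split of \(d_i\theta\) into \(j < \theta^{-1}(i)\) versus \(j \geq \theta^{-1}(i)\) is precisely engineered to track the position-\(r\) insertion performed by \(\delta^k_r\), while the subtraction of \(1\) in the clauses with \(\theta(\cdot) > i\) mirrors the deletion of coordinate \(i\) effected on the \(t_\theta\circ\delta^k_i\) side. Keeping the argument \(j = m-1\) (rather than \(m\)) straight in the \(m > r\) case is the main thing to watch; beyond that, no genuine obstacle arises.
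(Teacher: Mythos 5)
Your proof is correct and follows essentially the same route as the paper's: a direct coordinate-by-coordinate evaluation of both composites at an arbitrary \(u \in [0,1]^{n-1}\), with the case split of \(d_i\theta\) (the \(j < \theta^{-1}(i)\) clauses for positions before the insertion point and the \(j \geq \theta^{-1}(i)\) clauses, via \(j = m-1\), after it) doing exactly the work it does in the paper. The only difference is presentational---the paper packages the right-hand side as \(\delta^k_{\theta^{-1}(i)}\) applied to the vector \((v_{\theta(1)},\dots,\widehat{v_{\theta(\theta^{-1}(i))}},\dots,v_{\theta(n)})\) and then matches \(u_{d_i\theta(j)}\) against \(v_{\theta(j)}\) or \(v_{\theta(j+1)}\), whereas you compare the two \(n\)-vectors entry by entry---but the substance is identical.
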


\begin{proof}
	Let \((u_1, \dots, u_{n-1}) \in [0,1]^{n-1}\). We have \[\delta ^k_i (u_1, \dots, u_{n-1}) = (u_1, \dots, u_{i-1}, k, u_i \dots, u_{n-1}).\] For \(j \in \{1, \dots ,n\}\), set 
	\[v_j = \left \{ \begin{array}{ll}
		u_j, & 1 \leq j < i,\\
		k, & j = i,\\
		u_{j-1}, & i < j \leq n.
	\end{array}\right.\]
	Then we have
	\begin{align*}
		t_\theta\circ \delta ^k_i (u_1, \dots, u_{n-1}) &= t_\theta (v_1, \dots, v_n)\\
		&= (v_{\theta(1)}, \dots, v_{\theta(n)})\\
		&= (v_{\theta(1)} , \dots, v_{\theta(\theta^{-1}(i)-1)},k,v_{\theta(\theta^{-1}(i)+1)}, \dots, v_{\theta(n)})\\	
		&= \delta^k_{\theta^{-1}(i)}(v_{\theta(1)}, \dots, v_{\theta(\theta^{-1}(i)-1)},v_{\theta(\theta^{-1}(i)+1)}, \dots, v_{\theta(n)}).											
	\end{align*}
	We have 
	\[\delta^k_{\theta^{-1}(i)}\circ t_{d_i\theta}(u_1, \dots, u_{n-1}) = \delta^k_{\theta^{-1}(i)}(u_{d_i\theta(1)}, \dots, u_{d_i\theta(n-1)}).\]
	Since for \(j \in \{1, \dots, n-1\}\),
	\[u_{d_i\theta(j)} = \left \{ \begin{array}{ll}
		u_{\theta(j)}, & j < \theta^{-1}(i),\, \theta(j) < i,\\
		u_{\theta(j) - 1}, & j < \theta^{-1}(i),\, \theta(j) > i,\\
		u_{\theta(j+1)}, & j \geq \theta^{-1}(i),\, \theta(j+1) < i,\\
		u_{\theta(j+1)-1},& j \geq \theta^{-1}(i),\, \theta(j+1) > i
	\end{array}\right. = \left \{ \begin{array}{ll}
		v_{\theta(j)}, & j < \theta^{-1}(i),\\
		v_{\theta(j+1)}, & j \geq \theta^{-1}(i),
	\end{array}\right.\]
	the result follows.
\end{proof}

\begin{theor} \label{symcubical}
	 \(HL(\Q) = HL(S\Q)\).
\end{theor}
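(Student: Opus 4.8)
The plan is to prove the two inclusions separately. The inclusion $HL(\Q) \subseteq HL(S\Q)$ has already been recorded above: the assignment $x \mapsto (id, x)$ is a morphism of HDAs $\Q \to S\Q$, so Proposition \ref{HLinc} applies. For the reverse inclusion $HL(S\Q) \subseteq HL(\Q)$ I would again invoke Proposition \ref{HLinc}, this time by exhibiting a simple cubical dimap $f \colon S\Q \to \Q$ pointing in the opposite direction. Since a morphism of precubical sets cannot permute the coordinates of a cube, no such arrow can be an honest morphism, and this is exactly where the added flexibility of simple cubical dimaps is needed.

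The natural candidate is the map $f \colon |SP| \to |P|$ determined on cells by
\[
f([(\theta, x), u]) = [x, t_\theta(u)], \qquad (\theta, x) \in SP_n,\; u \in [0,1]^n.
\]
Granting that this is well defined, it is at once a simple cubical dimap of precubical sets in the sense of the definition, with associated cube $x$ and permutation $\theta$, so that $f(\theta, x) = x$ under the usual abuse of notation. The crux of the argument---and the step I expect to be the only genuine obstacle---is checking that $f$ respects the equivalence relations defining $|SP|$ and $|P|$, so that it descends to a (continuous) map on the quotient. Concretely, for $(\theta, x) \in SP_{n+1}$ one compares $f([d^k_i(\theta,x), u])$ with $f([(\theta,x), \delta^k_i(u)])$. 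Expanding the face map of $SP$ as $d^k_i(\theta, x) = (d_i\theta,\, d^k_{\theta^{-1}(i)}x)$, the first expression equals $[d^k_{\theta^{-1}(i)}x,\, t_{d_i\theta}(u)]$, while the second equals $[x,\, t_\theta \circ \delta^k_i(u)]$. Applying the defining relation of $|P|$ to the first term rewrites it as $[x,\, \delta^k_{\theta^{-1}(i)}\circ t_{d_i\theta}(u)]$, and the identity $\delta^k_{\theta^{-1}(i)}\circ t_{d_i\theta} = t_\theta \circ \delta^k_i$ supplied by Lemma \ref{deltalem} makes the two sides coincide. Continuity of the descended map then follows from the universal property of the quotient topology.

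It remains to verify that $f$ is a dimap of \emph{HDAs}, which is routine. It sends the initial state $(id,\imath)$ of $S\Q$ to $\imath$, and it carries the final states $S_0 \times F$ into $F$ because $S_0 = \{id_\emptyset\}$. For labels, every edge of $SP$ has the form $(id, e)$ since $S_1 = \{id\}$, and then $\lambda(f(id,e)) = \lambda(e) = \mu(id, e)$, so $f$ preserves labels. Hence $f$ is a simple cubical dimap $S\Q \to \Q$, and Proposition \ref{HLinc} gives $HL(S\Q) \subseteq HL(\Q)$. Combining this with the inclusion already noted yields $HL(\Q) = HL(S\Q)$.
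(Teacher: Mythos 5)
Your proposal is correct and follows essentially the same route as the paper: the same map $f([(\theta,x),u]) = [x, t_\theta(u)]$, with well-definedness established via the identity $\delta^k_{\theta^{-1}(i)}\circ t_{d_i\theta} = t_\theta \circ \delta^k_i$ of Lemma \ref{deltalem}, followed by the same routine verification of initial/final states and labels. The only cosmetic difference is that you spell out the easy inclusion $HL(\Q) \subseteq HL(S\Q)$ inside the proof, whereas the paper records it just before the theorem statement.
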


\begin{proof}
	We only have to show the inclusion \(HL(S\Q) \subseteq HL(\Q)\). By Proposition \ref{HLinc}, it suffices to construct a simple cubical dimap \(S\Q \to \Q\). Consider  the continuous map \(f\colon |SP| \to |P|\) defined by 
	\[f([(\theta,x),u]) = [x, t_{\theta} (u)]\quad ((\theta,x)\in (SP)_n, \, u \in [0,1]^n).\]
	This is well defined because, by Lemma \ref{deltalem}, for \((\theta,x)\in (SP)_n\), \(u \in [0,1]^{n-1}\), \(i \in \{1, \dots, n\}\), and \(k \in \{0,1\}\),
	\begin{align*}
	[x,t_{\theta}\circ \delta^k_i(u)] &= [x,\delta^k_{\theta^{-1}(i)}\circ t_{d_i\theta}(u)]
	= [d^k_{\theta^{-1}(i)}x,t_{d_i\theta}(u)].	
	\end{align*}
	By construction, \(f\) is a simple cubical dimap of precubical sets. We have \(f(\theta, x) = x\) for all \((\theta, x) \in SP\). Hence \(f\) preserves the initial and the final states. Moreover, for every edge \(x \in P_1\), \(\lambda (f(id,x)) = \lambda(x)\). It follows that \(f\) is a simple cubical dimap of HDAs.
\end{proof}


\section{HDA models of transition systems} \label{secHM}

Following \cite{transhda}, we define HDA models of transition systems. An HDA model can be constructed with respect to an arbitrary relation on the alphabet of labels. In this paper, we are interested in the case where this relation is an independence or an acyclic relation.

\subsection*{Transition systems and independence relations}
\label{transdef}

	A \emph{transition system} is a \(1\mbox{-}\)truncated extensional HDA, i.e., an HDA with no cubes of dimension \(\geq 2\) and no two edges with the same label and the same start and end vertices. An \emph{independence relation} is an irreflexive and symmetric relation on the alphabet of action labels \(\Sigma\). An independence relation equips the alphabet with a notion of concurrency: two actions are independent if they may be executed sequentially or simultaneously without any relevant difference. Independence relations play a fundamental role in trace theory \cite{Mazurkiewicz, Mazurkiewicz2}. The results of this paper apply, in particular, to asynchronous transition systems, which are transition systems over an alphabet with an independence relation satisfying certain conditions (see \cite{WinskelNielsen}).
	
\subsection*{Acyclic relations}

A relation \(\ltimes \) on a set \(X\) is called \emph{acyclic} if for all \(n \geq 1\) and \(x_1, \dots, x_n\in X\),
\[x_1 \ltimes x_2,\, x_2 \ltimes  x_3,\, \dots ,\, x_{n-1}\ltimes  x_n \Rightarrow {x_n \not\! \ltimes  x_1}.\]	
Note that an acyclic relation is irreflexive \((n = 1)\) and, moreover,  asymmetric \((n = 2)\). 

\begin{prop}
Let \(I\) be an independence relation on \(\Sigma\). Consider a totally ordered set \((S,\leq)\) and a map \(f\colon \Sigma \to S\) such that \[\forall\, a, b \in \Sigma : a \,I\, b \Rightarrow f(a) \not= f(b).\]
Then an acyclic relation \(\ltimes \subseteq \Sigma \times \Sigma\) whose symmetric closure is \(I\) is given by
\[a \ltimes b \ratio \Leftrightarrow a \,I\, b ,\; f(a) \leq  f(b).\]	
\end{prop}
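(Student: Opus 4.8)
The plan is to verify directly the two defining properties of the claimed relation: that its symmetric closure equals $I$, and that it is acyclic. Both follow from unwinding the definition $a \ltimes b \Leftrightarrow (a\,I\,b \text{ and } f(a) \leq f(b))$ together with the hypothesis on $f$.

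\emph{Symmetric closure.} First I would show that the symmetric closure of $\ltimes$ is $I$. If $a \ltimes b$ or $b \ltimes a$, then in either case $a\,I\,b$ (using symmetry of $I$ for the second disjunct), so the symmetric closure is contained in $I$. Conversely, suppose $a\,I\,b$. Since $I$ is irreflexive we have $a \neq b$, and by the hypothesis on $f$ this forces $f(a) \neq f(b)$. As $(S,\leq)$ is totally ordered, exactly one of $f(a) < f(b)$ or $f(b) < f(a)$ holds. In the first case $a \ltimes b$; in the second, $b\,I\,a$ (symmetry of $I$) together with $f(b) \leq f(a)$ gives $b \ltimes a$. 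Either way the pair lies in the symmetric closure, so $I$ is contained in it, and the two relations coincide.

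\emph{Acyclicity.} Next I would check the acyclicity condition. Suppose $x_1 \ltimes x_2,\, x_2 \ltimes x_3,\, \dots,\, x_{n-1} \ltimes x_n$. Each relation $x_i \ltimes x_{i+1}$ gives $f(x_i) \leq f(x_{i+1})$, so by transitivity of $\leq$ we obtain $f(x_1) \leq f(x_n)$. The key observation is that the hypothesis on $f$ makes the order \emph{strict} along $\ltimes$: whenever $a \ltimes b$, we have $a\,I\,b$, hence $a \neq b$, hence $f(a) \neq f(b)$, so in fact $f(a) < f(b)$. Applying this to the first link gives the strict inequality $f(x_1) < f(x_2) \leq f(x_n)$, so $f(x_1) < f(x_n)$, and in particular $f(x_1) \neq f(x_n)$. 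Were $x_n \ltimes x_1$ to hold, it would yield $f(x_n) \leq f(x_1)$, contradicting $f(x_1) < f(x_n)$. Hence $x_n \not\!\ltimes x_1$, which is exactly acyclicity.

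I do not anticipate a genuine obstacle here; the proof is a routine verification. The one point that deserves care is recognizing that the hypothesis $a\,I\,b \Rightarrow f(a) \neq f(b)$ is precisely what upgrades the weak inequality $f(a) \leq f(b)$ to a strict one along every $\ltimes$-edge, and that a single strict link in the chain suffices to break the cycle. The boundary case $n = 1$ (irreflexivity) is subsumed, since $x_1 \ltimes x_1$ would already require $x_1\,I\,x_1$, contradicting irreflexivity of $I$.
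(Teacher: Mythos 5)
Your proof is correct and takes essentially the same route as the paper: the symmetric closure is verified via totality of \(\leq\), and acyclicity via monotonicity of \(f\) along \(\ltimes\)-chains together with the hypothesis on \(f\) to rule out \(f(x_1) = f(x_n)\) (the paper phrases this as a case split on whether \(f(x_n) \leq f(x_1)\) rather than via your strict first link, a cosmetic difference). One wording slip worth fixing: the inference ``\(a \neq b\), hence \(f(a) \neq f(b)\)'' presumes injectivity of \(f\), which is not assumed --- but at every point where you use it you actually have \(a \,I\, b\), so the hypothesis on \(f\) yields \(f(a) \neq f(b)\) directly, and the intermediate step \(a \neq b\) should simply be dropped.
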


\begin{proof}
	Let \(x_1, \dots, x_n\in \Sigma\) such that 
	\(x_1 \ltimes x_2,\, x_2 \ltimes  x_3,\, \dots ,\, x_{n-1}\ltimes  x_n\). Then 
	\[f(x_1) \leq  f(x_2) \leq \dots \leq f(x_n)\]
	and therefore \(f(x_1) \leq f(x_n)\). If 
	\(f(x_n) \not \leq f(x_1)\), then \({x_n \not\! \ltimes  x_1}\). If \(f(x_n) \leq f(x_1)\), then \(f(x_n) = f(x_1)\) and therefore \({x_n {\,\not\! I\,} x_1}\). Hence \(x_n \not\! \ltimes  x_1\) in this case too. Thus, \(\ltimes\) is acyclic.
	
	Let \(a,b \in \Sigma\). Since \(\leq\) is a total order, we have \(f(a) \leq f(b)\) or \(f(b) \leq f(a)\). Hence if \({a \,I\, b}\) \((\Leftrightarrow {b \,I\, a}) \), we have \({a \ltimes b}\) or \({b\ltimes a}\). If, conversely, \({a \ltimes b}\) or \({b \ltimes a}\), then \({a \,I\, b}\)  by definition of \(\ltimes\). Thus, \(I\) is the symmetric closure of \(\ltimes\). 
\end{proof}

\begin{exs}
	(i) Consider an independence relation \(I \subseteq \Sigma \times \Sigma\), and let \(\leq\) be a total order on \(\Sigma\). Then an acyclic relation \(\ltimes\) whose symmetric closure is \(I\) is given by 
	\[a \ltimes b \ratio \Leftrightarrow a \,I\, b, \; a \leq b.\]
	
	(ii) Let \(I\) be an independence relation on \(\Sigma\), and let \(\rm{pid} \colon \Sigma \to \N\) be a function associating with each label a process ID. If no two actions of the same process are independent, an acyclic relation \(\ltimes\) whose symmetric closure is \(I\) is given by 
	\[a \ltimes b \ratio \Leftrightarrow a \,I\, b, \; {\rm{pid}}(a) \leq {\rm{pid}}(b).\]
\end{exs}

\subsection*{HDA models} \label{HDAmodels}

Let \(\T = (X, \imath, F, \lambda)\) be a transition system, and let \(R\) be a relation on \(\Sigma\). The relation \(R\) does not have to satisfy any condition. We say that an HDA \(\Q = (Q, \jmath, G, \mu)\) is an \emph{HDA model of \(\T\) with respect to \(R\)} if the following conditions hold:
\begin{description}
	\item[HM1] \(\Q_{\leq 1} = \T\), i.e., \(Q_{\leq 1} = X\), \(\jmath = \imath\), \(G = F\), and \(\mu = \lambda\). 
	\item[HM2] For all \(x \in Q_2\), \(\lambda(d^0_2x)\;R\; \lambda(d^0_1x)\).		
	\item[HM3] For all \(m\geq 2\) and \(x,y \in Q_m\), if \(d^k_rx = d^k_ry\) for all \(r \in\{1,\dots ,m\}\) and \(k \in \{0,1\}\), then \(x = y\).		
	\item[HM4] \(\Q\) is maximal with respect to the properties  HM1-HM3, i.e., \(\Q\) is not a proper sub-HDA of any HDA satisfying HM1-HM3.
\end{description}
Condition HM1 says that \(\Q\) is built on top of \(\T\) by filling in empty cubes. By condition HM2, an empty square may only be filled in if the labels of its edges are related. Condition HM3 ensures that no empty cube is filled in twice in the same way. By condition HM4, all admissible empty cubes are filled in. 

It has been shown in \cite[Thm. 4.2, Cor. 4.5]{transhda} that an HDA model of a transition system with respect to a given  relation always exists and that its isomorphism class only depends on the isomorphism class of the transition system. 

For later use, we state the following two propositions from \cite{transhda}:

\begin{prop} {\rm{\cite[Prop. 4.3]{transhda}}} \label{prop43}
	 Let \(\T = (X, \imath, F, \lambda)\) be a transition system, and let $\Q = (Q, \imath, F, \lambda)$ be an HDA model of $\T$ with respect to a relation \(R\) on \(\Sigma\). Consider an integer $n \geq 2$ and $2n$ (not necessarily distinct) elements $x^k_i \in Q_{n-1}$ $(k \in \{0,1\}, i \in \{1, \dots ,n\})$ such that $d^k_ix^l_j = d^l_{j-1}x^k_i$ for all $1\leq i<j \leq n$ and $k,l \in\{0,1\}$. If $n=2$, suppose also that $\lambda(x^0_i) = \lambda(x^1_i)$ for $i \in \{1,2\}$ and that $\lambda(x^0_2) \,R\, \lambda(x^0_1)$. Then there exists a unique element $x \in Q_n$ such that $d^k_ix = x^k_i$ for all $i \in \{1, \dots ,n\}$ and $k \in \{0,1\}$.
\end{prop}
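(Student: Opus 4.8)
The plan is to separate uniqueness from existence. Uniqueness is immediate: if $x,y \in Q_n$ both satisfy $d^k_i x = x^k_i = d^k_i y$ for all $i$ and $k$, then since $n \geq 2$, condition HM3 forces $x = y$. So the real content is the existence of at least one filling cube, and for this I would argue by contradiction using the maximality condition HM4.

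Suppose no $x \in Q_n$ has $d^k_i x = x^k_i$ for all $i,k$. I would then enlarge $\Q$ by adjoining a single fresh cube $x$ (a symbol not in $Q$) in degree $n$, setting $Q' = Q \cup \{x\}$ with $d^k_i x := x^k_i$ and leaving all other data—initial state, final states, labeling, and the face maps on the old cubes—unchanged. The graded set $Q'$ is then a precubical set: the only cubical identities involving the new cube read $d^k_i d^l_j x = d^k_i x^l_j$ and $d^l_{j-1} d^k_i x = d^l_{j-1} x^k_i$, and these two sides coincide precisely because of the standing hypothesis $d^k_i x^l_j = d^l_{j-1} x^k_i$; all remaining identities hold because $Q$ is already a precubical set. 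Since $n \geq 2$, no new vertices or edges are created, so $Q'_{\leq 1} = X$ and $\Q' := (Q', \imath, F, \lambda)$ satisfies HM1.

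The next step is to check that $\Q'$ is a genuine HDA satisfying HM2 and HM3. For $n \geq 3$ this is essentially automatic: adjoining a cube of degree $\geq 3$ adds no $2$-cube to $Q'$ (every $2$-dimensional face of $x$ is an iterated face, hence already lies in $Q_2$), so the HDA labeling condition on $2$-cubes and HM2 are inherited verbatim from $\Q$. The delicate case is $n = 2$, which is exactly where the two extra hypotheses are needed: the assumption $\lambda(x^0_i) = \lambda(x^1_i)$ guarantees that the new $2$-cube $x$ respects the HDA labeling condition $\lambda(d^0_i x) = \lambda(d^1_i x)$, while $\lambda(x^0_2)\,R\,\lambda(x^0_1)$ is exactly HM2 for $x$. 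Condition HM3 holds in every degree, since the only way it could fail would be for some old cube $y \in Q_n$ to carry the same family of faces $x^k_i$ as $x$—and this is precisely what we assumed does not occur.

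Finally, $\Q$ is a proper sub-HDA of $\Q'$: the underlying data agree apart from the extra cube $x$, and $Q$ is stable under the face maps of $Q'$, so $Q$ is a precubical subset. Since $\Q'$ satisfies HM1-HM3, this contradicts the maximality HM4 of $\Q$, and hence the filling cube $x$ must already lie in $Q_n$. I expect the only genuine care to be needed in the $n=2$ bookkeeping, where the hypotheses on the labels and on $R$ must be matched against the HDA labeling condition and HM2, and in confirming that adjoining $x$ really produces a sub-HDA in the precise sense of the definition; the cubical-identity verification for higher $n$ is a direct consequence of the assumed compatibility relations.
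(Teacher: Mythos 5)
This paper states Proposition \ref{prop43} without proof, importing it verbatim from \cite[Prop.~4.3]{transhda}, so there is no in-paper argument to compare against; judged on its own merits, your proof is correct and is precisely the standard maximality argument of the cited source. Uniqueness from HM3, and existence by adjoining a single fresh $n$-cube $x$ with $d^k_i x := x^k_i$ --- where the compatibility hypothesis yields the cubical identities, the $n=2$ label hypotheses yield exactly the HDA labeling condition and HM2, the assumed non-existence of a filler yields HM3, and the resulting proper extension contradicts HM4 --- accounts correctly for every condition in the definition of an HDA model, including the observation that no new vertices, edges, or $2$-cubes arise when $n \geq 3$.
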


\begin{prop} {\rm{\cite[Prop. 4.7]{transhda}}} \label{prop47} 
	Let \(\T = (X, \imath, F, \lambda)\) be a transition system, and let \(\Q = (Q,\imath, F, \lambda)\) be an HDA satisfying HM1 and HM2 with respect to \(\T\) and a relation \(R\) on \(\Sigma\). Then 
	\(\lambda(e_ix) \,R\, \lambda (e_jx)\) 
	for all \(n \geq 2\), \(x \in Q_n\), and \(1 \leq i < j \leq n\).
	
\end{prop}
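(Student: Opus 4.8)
The plan is to induct on the dimension $n$. For the \textbf{base case} $n=2$ the only admissible pair is $i=1$, $j=2$, and since by definition $e_1x = d^0_2x$ and $e_2x = d^0_1x$, the desired relation $\lambda(e_1x)\,R\,\lambda(e_2x)$ is exactly condition HM2. For the \textbf{inductive step}, let $n\geq 3$, assume the statement for cubes of dimension $n-1$, and fix $x\in Q_n$ together with $1\leq i<j\leq n$. Since $n\geq 3$ there is an index $r\in\{1,\dots,n\}\setminus\{i,j\}$; I would pass to the face $y=d^0_rx\in Q_{n-1}$, which again lies in $\Q$ and hence inherits HM1 and HM2. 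Putting $i'=i$ or $i-1$ according as $i<r$ or $i>r$, and defining $j'$ from $j$ in the same way, one checks directly that $1\leq i'<j'\leq n-1$. The inductive hypothesis applied to $y$ then gives $\lambda(e_{i'}y)\,R\,\lambda(e_{j'}y)$, so everything comes down to identifying the starting edges of the face $y$ with those of $x$.

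The crux is therefore the following reindexing identity, which I would isolate as a lemma about an arbitrary precubical set: for $y\in Q_m$, an index $r\in\{1,\dots,m\}$, and $i\neq r$, with $i'$ defined as above, one has $e_{i'}(d^0_ry)=e_iy$. Granting this and applying it with $m=n$, once for $i$ and once for $j$, yields $e_{i'}y=e_ix$ and $e_{j'}y=e_jx$, and the inductive step closes. The identity is plausible because both $e_iy$ and $e_{i'}(d^0_ry)$ are obtained from $y$ by applying the operator $d^0$ in \emph{every} coordinate direction except the $i$th, so geometrically both name the initial edge of $y$ in direction $i$; the content of the lemma is that the resulting edge does not depend on the order in which the collapses are carried out.

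I expect the \textbf{main obstacle} to be precisely this order-independence, i.e.\ the proof of the lemma. One favorable case is harmless: whenever $j<n$ one may take $r=n$, and then $e_i(d^0_nx)=e_ix$ holds \emph{verbatim}, because $d^0_n$ is already the innermost factor of $e_ix = d^0_1\cdots\widehat{d^0_i}\cdots d^0_nx$, so no reindexing occurs at all. The genuine work is the case $j=n$, where one is forced to collapse an interior direction $r<n$: transporting the operator $d^0_r$ inward past the remaining $d^0$'s requires repeated use of the cubical identity $d^0_ad^0_b=d^0_{b-1}d^0_a$ (for $a<b$), each application shifting the index of an operator it crosses. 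Carrying out this bookkeeping — and verifying that the two products genuinely coincide after all the shifts, as happens already for $m=3$ where $d^0_1d^0_1=d^0_1d^0_2$ reconciles the two expressions — is the one step that is not purely formal. (An equivalent, slightly more global route would collapse all of $\{1,\dots,n\}\setminus\{i,j\}$ at once to obtain the $2$-face spanned by directions $i$ and $j$ and apply HM2 to it directly, but this needs the same order-independence fact in one stroke; the inductive formulation has the advantage of localizing the cubical-identity computation to a single collapsed direction.)
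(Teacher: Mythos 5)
The paper offers no proof of this proposition to compare against: it is quoted verbatim from \cite[Prop.\ 4.7]{transhda}, and the present paper only uses it as an imported fact. Judged on its own merits, your proof is correct. The base case is precisely HM2 (indeed $e_1x=d^0_2x$ and $e_2x=d^0_1x$ for $x\in Q_2$), your index bookkeeping $1\leq i'<j'\leq n-1$ checks out in all three configurations (for $r<i<j$ note $i\geq 2$, so $i'=i-1\geq 1$; for $i<r<j$ note $j\geq i+2$, so $i'=i<j-1=j'$; for $i<j<r$ note $j\leq n-1$), and the reindexing lemma $e_{i'}(d^0_ry)=e_iy$ that you correctly isolate as the crux is true. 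The bubbling you sketch does close without surprises: writing $e_{i'}(d^0_rx)$ as the string $d^0_1\cdots\widehat{d^0_{i'}}\cdots d^0_{n-1}d^0_r$, the cubical identity in the form $d^0_md^0_r=d^0_rd^0_{m+1}$ (valid for $m\geq r$, including the boundary case $m=r$, which is your $d^0_1d^0_1=d^0_1d^0_2$ reconciliation) migrates the trailing $d^0_r$ leftward, incrementing each operator it crosses, until it reaches its sorted slot, and the result is exactly the sorted string $d^0_1\cdots\widehat{d^0_i}\cdots d^0_n=e_ix$. If you want to discharge the order-independence without any bookkeeping, use the presheaf description given in Section 1: both composites of face maps are dual to composites of cofaces $\delta^0$ in the category $\square$, and both composites $[0,1]\to[0,1]^n$ equal the same continuous map, namely $u\mapsto(0,\dots,0,u,0,\dots,0)$ with $u$ in slot $i$ (for $i>r$, the insertion $\delta^0_r$ shifts slot $i'=i-1$ back up to $i$); since morphisms of $\square$ are actual maps and precubical sets are presheaves on $\square$, the two face-operator composites coincide. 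Your parenthetical alternative---collapsing all directions except $i$ and $j$ at once to a $2$-face and applying HM2 to it---is the direct form of the same argument and is essentially how such statements are usually proved; your inductive version trades that single global collapse for $n-2$ local ones, each needing only one pass of the cubical identity.
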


\section{The homology language of HDA models} \label{secMain}

Throughout this section, let \(\T = (X,\imath,F,\lambda)\) be a transition system, and let \(I\) be an independence relation on \(\Sigma\) that is the symmetric closure of an acyclic relation \(\ltimes\). Let furthermore  \(\A = (P, \imath,F, \lambda)\) and \(\Q = (Q,\imath,F,\lambda)\) be HDA models of \(\T\) with respect to \(I\) and \(\ltimes\), respectively. By the main result of this paper, \(\Q\) is a much smaller HDA model of \(\T\) than \(\A\) with the same homology language:

\begin{theor} \label{main}
	\(\A \cong S\Q\) and \(HL(\A) = HL(\Q)\).
\end{theor}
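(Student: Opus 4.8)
The plan is to deduce everything from the single assertion that the \emph{underlying HDA} of \(S\Q = (SQ, (id,\imath), S_0\times F, \mu)\) is an HDA model of \(\T\) with respect to \(I\). Granting this, \(\A \cong S\Q\) follows from the uniqueness of HDA models up to isomorphism \cite[Thm. 4.2, Cor. 4.5]{transhda}, and the homological statement is then immediate: an isomorphism of HDAs is a simple cubical dimap in both directions, so Proposition \ref{HLinc} gives \(HL(\A) = HL(S\Q)\), while \(HL(S\Q) = HL(\Q)\) is precisely Theorem \ref{symcubical}. The whole theorem thus reduces to verifying conditions HM1--HM4 for \(S\Q\) relative to \(\T\) and \(I\).

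First I would dispose of HM1--HM3 by direct computation. Condition HM1 holds because \(S_0 = \{id_\emptyset\}\) and \(S_1 = \{id\}\), so the \(1\)-skeleton of \(SQ\) is canonically identified with that of \(\Q\), namely \(\T\). For HM2, a \(2\)-cube \((\theta,x) \in S_2\times Q_2\) satisfies \(\mu(d^0_2(\theta,x)) = \lambda(d^0_{\theta^{-1}(2)}x)\) and \(\mu(d^0_1(\theta,x)) = \lambda(d^0_{\theta^{-1}(1)}x)\); since HM2 for \(\Q\) gives \(\lambda(d^0_2x)\ltimes\lambda(d^0_1x)\) and therefore \(\lambda(d^0_2x)\,I\,\lambda(d^0_1x)\), the symmetry of \(I\) yields the required relation for both \(\theta \in S_2\). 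This is exactly where the passage from \(\ltimes\) to its symmetric closure is used. For HM3, equal faces of \((\sigma,x)\) and \((\theta,y)\) in \(S_m\times Q_m\) force \(d_i\sigma = d_i\theta\) for all \(i\); when \(m\geq 3\) Proposition \ref{ddeterminism} gives \(\sigma=\theta\), after which \(x=y\) follows from HM3 for \(\Q\). When \(m=2\) the permutation faces are uninformative, so I would argue by contradiction: if \(\sigma\neq\theta\), matching the \(Q\)-components shows that the ordered pairs of labels \((\lambda(d^0_2x),\lambda(d^0_1x))\) and \((\lambda(d^0_2y),\lambda(d^0_1y))\) are reverses of one another, forcing both \(a\ltimes b\) and \(b\ltimes a\) and contradicting the asymmetry of \(\ltimes\); hence \(\sigma=\theta\) and then \(x=y\).

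The hard part will be HM4, the maximality of \(S\Q\). Given HM1--HM3, I would reformulate maximality as a cube-filling property: it is equivalent to the statement that every family of faces in \(SQ_{n-1}\) that is compatible in the sense of Proposition \ref{prop43} (relative to \(I\)) is the face family of some \(n\)-cube of \(SQ\); the implication from filling to maximality is a routine induction on dimension using HM3 inside any putative larger HDA. The substance is producing these fillings. The key observation is that, by Proposition \ref{prop47} applied to \(I\), the edge labels of any cube are pairwise \(I\)-related, hence (by irreflexivity of \(I\)) distinct and pairwise \(\ltimes\)-comparable; since \(\ltimes\) is acyclic, it restricts to a strict total order on any such set of labels. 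This canonical ordering pins down the permutation component of the cube to be filled, while \(\Q\), being an HDA model for \(\ltimes\), supplies the underlying cube via its own filling property (Proposition \ref{prop43} for \(\ltimes\)). The delicate point, which I expect to be the crux of the entire argument, is the permutation bookkeeping: one must check that the permutation components of a compatible face family assemble into a well-defined \(\theta\in S_n\) realizing the prescribed faces, and that the resulting pair \((\theta,x)\) does restrict to them. This is where the permutation identities will be needed, Proposition \ref{ddeterminism} securing uniqueness of \(\theta\) for \(n\geq 3\) and Proposition \ref{prop36} reconciling the reindexings \(i\mapsto\theta^{-1}(i)\) that occur in the face maps of \(SQ\).

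Once HM1--HM4 are established, \(S\Q\) is an HDA model of \(\T\) with respect to \(I\), so \(\A\cong S\Q\) by uniqueness, and the equality \(HL(\A)=HL(\Q)\) follows as explained in the first paragraph.
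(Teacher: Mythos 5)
Your proposal follows the paper's route almost exactly: the paper likewise reduces the whole theorem to showing that \(S\Q\) is an HDA model with respect to \(I\) (Proposition \ref{symHDAmodel}), invokes the uniqueness of HDA models \cite[Thm. 4.2, Cor. 4.5]{transhda} to obtain \(\A \cong S\Q\), and concludes with Theorem \ref{symcubical}; your HM1--HM3 verifications coincide with Lemma \ref{HM12} and the first half of the proof of Proposition \ref{symHDAmodel}, including the \(m=2\) contradiction with asymmetry of \(\ltimes\). One small caveat on HM1: since HM1 demands literal equality of \(1\)-skeletons and \((SQ)_{\leq 1} = SX \neq X\) as sets, the paper proves that \(S\Q\) is a model of \(S\T\) rather than of \(\T\); the ``canonical identification'' you appeal to is exactly where the invariance of HDA models under isomorphism of the transition system is needed. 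The only genuine divergence is in HM4, which you repackage as an intrinsic cube-filling property of \(S\Q\), whereas the paper argues on an ambient \(\B \supseteq S\Q\) satisfying HM1--HM3: it defines, for every cube \(b\) of \(\B\), the unique permutation \(\vartheta_b\) sorting the edge labels into \(\ltimes\)-increasing order (your ``canonical ordering,'' via Proposition \ref{prop47} and acyclicity), proves that \(b \mapsto \vartheta_b\) is a morphism of precubical sets (Lemma \ref{theta}), constructs \(\phi\colon B \to Q\) inductively via Proposition \ref{prop43} for \(\ltimes\) together with Proposition \ref{prop36} (Lemma \ref{phi}), checks \(\vartheta_{(\theta,x)} = \theta\) on \(S\Q\) (Lemma \ref{vartheta}), and concludes \(b = (\vartheta_b, \phi(b)) \in SQ\) by HM3 in \(\B\). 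The ``permutation bookkeeping'' you flag as the crux is precisely the content of these three lemmas --- Lemma \ref{theta} alone takes a page of case analysis on the face-map formula for \(S\) --- so you have correctly located where the work lies, but be aware that your variant adds a wrinkle the paper's organization avoids: you must extract \(\theta\) and the compatibility \(\theta^k_i = d_i\theta\) from a face family before any cube exists, and for \(n=3\) the \(S_2\)-components of the faces are not determined by their own faces (only label data distinguish \(id\) from \((2\;1)\)), which is why working with actual cubes of an ambient \(\B\), where Proposition \ref{prop47} applies directly, is slightly cleaner. Modulo executing that deferred bookkeeping, your plan is sound and delivers the theorem by essentially the paper's argument.
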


	By Theorem \ref{symcubical}, it is enough to prove that \(\A \cong S\Q\). This is an immediate consequence of the fact that \(S\Q\) is an HDA model of \(S\T\) with respect to \(I\), which will be established in  Proposition \ref{symHDAmodel} below after a number of preparatory results. The labeling function of \(S\T\) and \(S\Q\) will be denoted by \(\mu\). Recall that \(\mu\) is defined by \(\mu (id,x) = \lambda(x)\).

\begin{lem} \label{theta}
	Let \(\B = (B,(id,\imath),S_0\times F, \mu)\) be an HDA satisfying HM1 and HM2 with respect to \(S\T\) and \(I\). Then for each \(b \in B_n\) \((n \geq 2)\), there exists a unique permutation \(\vartheta_b\in S_n\) such that for all \(1 \leq i < j \leq n\),
	\[\mu(e_{\vartheta_b(i)}b) \ltimes \mu(e_{\vartheta_b(j)}b).\]
	 Moreover, setting \(\vartheta_b = id\) for \(b \in B_{\leq 1}\), the map \(\vartheta \colon B \to S\) is a morphism of precubical sets. 
\end{lem}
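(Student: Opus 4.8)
The plan is to realize $\vartheta_b$ as the permutation that sorts the labels of the starting edges of $b$ with respect to $\ltimes$, and then to show that this sorting is compatible with taking faces. For $c \in B_m$ I write $L_c(r) = \mu(e_r c)$ for the label of its $r$th starting edge.

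\emph{Existence and uniqueness.} Since $\B$ satisfies HM1 and HM2 with respect to the transition system $S\T$ and the relation $I$, Proposition \ref{prop47} gives $L_b(i)\, I\, L_b(j)$ for all $1 \le i < j \le n$. As $I$ is symmetric the labels $L_b(1), \dots, L_b(n)$ are pairwise $I$-related, and as $I$ is irreflexive they are pairwise distinct. Because $I$ is the symmetric closure of the asymmetric relation $\ltimes$, exactly one of $L_b(i) \ltimes L_b(j)$ and $L_b(j) \ltimes L_b(i)$ holds for each pair $i \ne j$, so $\ltimes$ restricts to a tournament on the $n$-element set $\{L_b(1), \dots, L_b(n)\}$. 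Acyclicity of $\ltimes$ rules out $3$-cycles, so this tournament is transitive, i.e.\ a strict total order. Listing $1, \dots, n$ in $\ltimes$-increasing order of their labels therefore determines a unique $\vartheta_b \in S_n$, which is precisely the unique permutation with $L_b(\vartheta_b(i)) \ltimes L_b(\vartheta_b(j))$ for all $i < j$.

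\emph{Reduction of the morphism property.} With $\vartheta_b = id$ for $b \in B_{\le 1}$, I must prove $\vartheta_{d^k_i b} = d^k_i \vartheta_b$ for $b \in B_n$, $1 \le i \le n$, $k \in \{0,1\}$; since $d^0_i = d^1_i =: d_i$ on $S$ this reads $\vartheta_{d^k_i b} = d_i \vartheta_b$. For $n = 1$ both sides equal $id_\emptyset$, and for $n = 2$ both lie in $S_1 = \{id\}$, so these cases follow from the convention. For $n \ge 3$ I would compare sorting permutations. Let $\rho_i \colon \{1, \dots, n-1\} \to \{1, \dots, n\} \setminus \{i\}$ be the increasing bijection ($\rho_i(m) = m$ for $m < i$, $\rho_i(m) = m+1$ for $m \ge i$). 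The argument rests on two facts: (a) labels are preserved under faces, $\mu(e_m(d^k_i b)) = \mu(e_{\rho_i(m)} b)$ for all $m$ and $k$, i.e.\ $L_{d^k_i b} = L_b \circ \rho_i$; and (b) with $p_0 = \vartheta_b^{-1}(i)$, one has $\rho_i(d_i\vartheta_b(p)) = \vartheta_b(p)$ for $p < p_0$ and $\rho_i(d_i\vartheta_b(p)) = \vartheta_b(p+1)$ for $p \ge p_0$, so that $d_i \vartheta_b$ is $\vartheta_b$ with its value at position $p_0$ deleted. Granting these, the sequence $L_{d^k_i b}(d_i\vartheta_b(1)), \dots, L_{d^k_i b}(d_i\vartheta_b(n-1))$ is obtained from the $\ltimes$-increasing sequence $L_b(\vartheta_b(1)), \dots, L_b(\vartheta_b(n))$ by deleting its $p_0$th term, hence is again $\ltimes$-increasing. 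The uniqueness from the first part then yields $d_i\vartheta_b = \vartheta_{d^k_i b}$.

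\emph{The two facts and the main obstacle.} Fact (b) is a direct verification from the case definition of $d_i$ and of $\rho_i$, distinguishing $\vartheta_b(p) < i$ from $\vartheta_b(p) > i$, and I expect no difficulty there. The real work is Fact (a). For $k = 0$ the cubical identities give $e_m(d^0_i b) = e_{\rho_i(m)} b$ as edges, so equality of labels is automatic. For $k = 1$ the two edges usually differ, and the idea is to display them as the two edges in direction $j := \rho_i(m)$ of the $2$-face $y = \big(\prod_{r \notin \{i,j\}} d^0_r\big) b \in B_2$ spanned by directions $i$ and $j$: repeated use of the identity $d^k_a d^l_c = d^l_{c-1} d^k_a$ ($a < c$) reduces the $(n-1)$-fold composite $e_m(d^1_i b)$ to a single face of $y$ and identifies $e_{\rho_i(m)} b$ with the parallel face, the precise indices depending on whether $j < i$ or $j > i$. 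The defining HDA condition $\mu(d^0_s y) = \mu(d^1_s y)$ for $2$-cubes then forces the two labels to coincide. The main obstacle is exactly this index bookkeeping---collapsing the iterated face composites to opposite edges of a single square uniformly in the cases $m < i$ and $m \ge i$.
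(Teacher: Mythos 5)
Your proposal is correct and follows essentially the same route as the paper: the same tournament/acyclicity argument gives the unique sorting permutation, and the morphism property is likewise obtained by showing the $d_i\vartheta_b$-ordered label sequence of $d^k_ib$ is the $\ltimes$-increasing sequence for $b$ with one term deleted, then invoking uniqueness. The one step you flag as the main obstacle---that parallel edges of a cube have the same label, i.e.\ your fact (a) for $k=1$---is not proved in the paper either but simply cited as \cite[Lemma 4.6]{transhda}, and your sketched reduction to a $2$-face is a correct proof of it.
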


\begin{proof}
	Let \(b \in B_n\) \((n\geq 2)\). By Proposition \ref{prop47}, we have \(\mu(e_ib) \,I\, \mu(e_jb)\) for all \(1 \leq i < j \leq n\). Since \(I\) is irreflexive, it follows that the set \[M = \{\mu(e_ib)\,|\, i \in \{1, \dots, n\}\}\] has \(n\) elements. Moreover, since \(I\) is the symmetric closure of \(\ltimes\), we have \(\mu(e_ib) \ltimes \mu(e_jb)\) or \(\mu(e_jb) \ltimes \mu(e_ib)\) for all \(i \not=j\). Since \(\ltimes \) is acyclic, this implies that it is transitive on \(M\). Indeed, if \(\mu(e_ib) \ltimes \mu(e_jb)\) and \(\mu(e_jb) \ltimes \mu(e_kb)\), then \(k \not= i\) and \(\mu(e_kb) \not\! \ltimes \mu(e_ib)\), which implies \(\mu(e_ib) \ltimes \mu(e_kb)\). Hence \(\ltimes\) is a strict total order on \(M\).  
	Since \(M\) has \(n\) elements, it follows that there exists a unique permutation \(\vartheta_b \in S_n\) such that
	\[\mu(e_{\vartheta_b(i)}b) \ltimes \mu(e_{\vartheta_b(j)}b)\]
	for all \(1 \leq i < j \leq n\).
	
	It remains to check that \(\vartheta\) is a morphism of precubical sets. Let \(b \in B_n\) \((n\geq 1)\),  \(i \in \{1, \dots ,n\}\), and \(k \in \{0,1\}\). If \(n \leq 2\), then \(d_i\vartheta_b = id = \vartheta_{d^k_ib}\). Suppose that \(n \geq 3\). We have 
	\[e_jd^0_ib = \left \{ \begin{array}{ll}
	e_jb, & 1 \leq j < i,\\
	e_{j+1}b, & i \leq j < n.	
	\end{array}\right.\]
	Since parallel edges of a cube have the same label (see, e.g.,  \cite[Lemma 4.6]{transhda}), we have \(\mu(e_jd^1_ib) = \mu(e_jd^0_ib)\) for all \(j \in \{1, \dots , n-1\}\). Hence for \(k\in \{0,1\}\), 
	\[\mu(e_jd^k_ib) = \left \{ \begin{array}{ll}
	\mu(e_jb), & 1 \leq j < i,\\
	\mu(e_{j+1}b), & i \leq j < n.	
	\end{array}\right.\]
	We therefore have 
	\begin{align*}
		\mu(e_{d_i\vartheta_b(j)}d^k_ib) &= \left \{ \begin{array}{ll}
			\mu(e_{\vartheta_b(j)}d^k_ib), & j < \vartheta_b^{-1}(i),\; \vartheta_b(j) < i,\\
			\mu(e_{\vartheta_b(j) - 1}d^k_ib), & j < \vartheta_b^{-1}(i),\; \vartheta_b(j) > i,\\
			\mu(e_{\vartheta_b(j+1)}d^k_ib), & j \geq \vartheta_b^{-1}(i),\; \vartheta_b(j+1) < i,\\
			\mu(e_{\vartheta_b(j+1)-1}d^k_ib),& j \geq \vartheta_b^{-1}(i),\; \vartheta_b(j+1) > i		
		\end{array}\right.\\	
		&= \left \{ \begin{array}{ll}
			\mu(e_{\vartheta_b(j)}b), & j < \vartheta_b^{-1}(i),\\
			\mu(e_{\vartheta_b(j+1)}b), &  j \geq \vartheta_b^{-1}(i).	
		\end{array}\right.
	\end{align*}
	Let \(1 \leq j < r \leq n-1\). If \(j < r < \vartheta_b^{-1}(i)\), we have
	\[\mu(e_{d_i\vartheta_b(j)}d^k_ib) = 	\mu(e_{\vartheta_b(j)}b) \ltimes  	\mu(e_{\vartheta_b(r)}b) = \mu(e_{d_i\vartheta_b(r)}d^k_ib).\]
	If \(j <  \vartheta_b^{-1}(i) \leq r\), we have
	\[\mu(e_{d_i\vartheta_b(j)}d^k_ib) = 	\mu(e_{\vartheta_b(j)}b) \ltimes  	\mu(e_{\vartheta_b(r +1)}b) = \mu(e_{d_i\vartheta_b(r)}d^k_ib).\]
	If \( \vartheta_b^{-1}(i) \leq j < r\), we have
	\[\mu(e_{d_i\vartheta_b(j)}d^k_ib) = 	\mu(e_{\vartheta_b(j+1)}b) \ltimes 	\mu(e_{\vartheta_b(r +1)}b) = \mu(e_{d_i\vartheta_b(r)}d^k_ib).\]
	Thus, \(\vartheta_{d^k_ib} = d_i\vartheta_b\).
\end{proof}

\begin{lem} \label{phi}
	Let \(\B = (B,(id,\imath),S_0\times F, \mu)\) be an HDA satisfying HM1 and HM2 with respect to \(S\T\) and \(I\). Then there exists a unique morphism of graded sets \(\phi \colon B \to Q\) such that \(\phi_{\leq 1} \colon B_{\leq 1} \to Q_{\leq 1}\) is the isomorphism given by \((id,x) \mapsto x\) and such that \(d^k_i\phi (b) = \phi(d^k_{\vartheta_b(i)}b)\) for all \(b \in B_n\) \((n \geq 1)\), \(i \in \{1, \dots, n\}\), and \(k \in \{0,1\}\).
\end{lem}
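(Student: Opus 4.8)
The plan is to construct $\phi$ by induction on the degree, using the existence-and-uniqueness statement for cubes in an HDA model (Proposition \ref{prop43}) to define $\phi$ on cubes of dimension $\geq 2$. On $B_{\leq 1}$ there is nothing to construct: the map $(id,x)\mapsto x$ is prescribed, and since $\vartheta_b = id$ for $b \in B_{\leq 1}$, the required face relation reduces to the statement that this map is a morphism of precubical sets, which it is (it is the canonical isomorphism $(S\T)_{\leq 1} \cong \T = \Q_{\leq 1}$). For uniqueness I would argue by induction: if $\phi$ and $\phi'$ both satisfy the conditions and agree on $B_{\leq n-1}$, then for $b \in B_n$ with $n \geq 2$ the prescribed relation forces $d^k_i\phi(b) = \phi(d^k_{\vartheta_b(i)}b) = \phi'(d^k_{\vartheta_b(i)}b) = d^k_i\phi'(b)$ for all $i,k$, so $\phi(b)$ and $\phi'(b)$ have the same faces and hence coincide by HM3.

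For existence, suppose $\phi$ has been defined on $B_{\leq n-1}$ $(n \geq 2)$ so as to satisfy the face relation there, and that $\lambda\circ\phi = \mu$ on edges. Given $b \in B_n$, I would set $x^k_i = \phi(d^k_{\vartheta_b(i)}b) \in Q_{n-1}$ and produce $\phi(b)$ as the unique cube of $Q_n$ with these faces via Proposition \ref{prop43}. This requires checking the compatibility $d^k_i x^l_j = d^l_{j-1} x^k_i$ for $1 \leq i < j \leq n$. Applying the inductive face relation together with the fact that $\vartheta$ is a morphism of precubical sets (Lemma \ref{theta}, giving $\vartheta_{d^l_{\vartheta_b(j)}b} = d_{\vartheta_b(j)}\vartheta_b$ and likewise for $c' = d^k_{\vartheta_b(i)}b$), both sides rewrite as values of $\phi$: one gets $d^k_i x^l_j = \phi\bigl(d^k_{d_{\vartheta_b(j)}\vartheta_b(i)}d^l_{\vartheta_b(j)}b\bigr)$ and $d^l_{j-1}x^k_i = \phi\bigl(d^l_{d_{\vartheta_b(i)}\vartheta_b(j-1)}d^k_{\vartheta_b(i)}b\bigr)$, and the two arguments of $\phi$ coincide by Proposition \ref{prop36} applied to $\theta = \vartheta_b$ and $x = b$. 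In the base dimension $n = 2$ I must also verify the extra hypotheses of Proposition \ref{prop43}: the equalities $\lambda(x^0_i) = \lambda(x^1_i)$ follow from $\lambda\circ\phi = \mu$ on edges together with the HDA labeling condition $\mu(d^0_{\vartheta_b(i)}b) = \mu(d^1_{\vartheta_b(i)}b)$ in $\B$, while the relation $\lambda(x^0_2)\ltimes\lambda(x^0_1)$ becomes, via $\lambda\circ\phi=\mu$ and the identification $d^0_{\vartheta_b(2)}b = e_{\vartheta_b(1)}b$, $d^0_{\vartheta_b(1)}b = e_{\vartheta_b(2)}b$ of the faces of a $2$-cube with its starting edges, exactly the ordering $\mu(e_{\vartheta_b(1)}b)\ltimes\mu(e_{\vartheta_b(2)}b)$ furnished by the defining property of $\vartheta_b$ in Lemma \ref{theta}. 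Proposition \ref{prop43} then yields a unique $\phi(b) \in Q_n$ with $d^k_i\phi(b) = x^k_i = \phi(d^k_{\vartheta_b(i)}b)$, which is precisely the face relation in degree $n$; this closes the induction and exhibits $\phi$ as a morphism of graded sets.

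The main obstacle is the compatibility verification $d^k_i x^l_j = d^l_{j-1} x^k_i$, where the twisting by $\vartheta$ interacts with the cubical identities. It is essential that Lemma \ref{theta} makes $\vartheta$ a genuine precubical morphism, so that the indices $d_{\vartheta_b(j)}\vartheta_b(i)$ arising after invoking the inductive hypothesis match exactly the indices appearing in the symmetric-cube identity of Proposition \ref{prop36}; without this alignment the two iterated faces of $b$ would not be seen to be equal. Everything else is either forced (uniqueness, and degrees $0$ and $1$) or routine label bookkeeping (the two $n=2$ hypotheses).
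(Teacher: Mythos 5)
Your proposal is correct and follows essentially the same route as the paper: an inductive construction via Proposition \ref{prop43}, with the compatibility $d^k_i x^l_j = d^l_{j-1} x^k_i$ verified by combining the inductive face relation, the precubical-morphism property of $\vartheta$ from Lemma \ref{theta}, and Proposition \ref{prop36} applied to $\theta = \vartheta_b$, plus the same $n=2$ label checks using $d^0_{\vartheta_b(2)}b = e_{\vartheta_b(1)}b$ and $d^0_{\vartheta_b(1)}b = e_{\vartheta_b(2)}b$. Your explicit uniqueness argument via HM3 is a small addition the paper leaves implicit in the uniqueness clause of Proposition \ref{prop43}, but it is the same mechanism.
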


\begin{proof}
	We construct \(\phi\) inductively. Let \(n \geq 2\), and suppose we have constructed \(\phi\) up to degree \(n-1\). Let \(b \in B_n\). Consider the elements \(y^k_i = \phi(d^k_{\vartheta_b(i)}b) \in Q_{n-1}\). Suppose that \(1 \leq i < j \leq n\). By the inductive hypothesis, Proposition \ref{prop36}, and  Lemma \ref{theta}, 
	\begin{align*}
	d^k_i y^l_j &= d^k_i\phi(d^l_{\vartheta_b(j)}b) 
	= \phi(d^k_{\vartheta_{d^l_{\vartheta_b(j)}b}(i)}d^l_{\vartheta_b(j)}b)
	= \phi(d^k_{d_{\vartheta_b(j)}\vartheta_{b}(i)}d^l_{\vartheta_b(j)}b)\\
	&= \phi(d^l_{d_{\vartheta_b(i)}\vartheta_b(j-1)}d^k_{\vartheta_{b}(i)}b)
	= \phi(d^l_{\vartheta_{d^k_{\vartheta_b(i)}b}(j-1)}d^k_{\vartheta_{b}(i)}b)
	= d^l_{j-1}\phi(d^k_{\vartheta_{b}(i)}b)\\
	&= d^l_{j-1}y^k_i.
	\end{align*}
	If \(n = 2\), we also have 
	\[\lambda(y^0_i) = \lambda(\phi(d^0_{\vartheta_b(i)}b)) = \mu(d^0_{\vartheta_b(i)}b) = \mu(d^1_{\vartheta_b(i)}b)  = \lambda(\phi(d^1_{\vartheta_b(i)}b)) = \lambda(y^1_i)\]
	for all \(i \in \{1,2\}\). Moreover, by Lemma \ref{theta},
	\begin{align*}
	\lambda(y^0_2) &= \lambda(\phi(d^0_{\vartheta_b(2)}b)) = \mu(d^0_{\vartheta_b(2)}b) = \mu(e_{\vartheta_b(1)}b)\\
	&\ltimes \mu(e_{\vartheta_b(2)}b) = \mu(d^0_{\vartheta_b(1)}b) = \lambda(\phi(d^0_{\vartheta_b(1)}b)) = \lambda(y^0_1).
	\end{align*} 
	By Proposition \ref{prop43}, there exists a unique cube \(y \in Q_n\) such that \(d^k_iy = y^k_i\) for all \(k \in \{0,1\}\) and \(i \in \{1, \dots, n\}\). We set \(\phi(b) = y\). This defines \(\phi\) in degree \(n\).
\end{proof}

\begin{lem} \label{HM12}
	\(S\Q\) satisfies HM1 and HM2 with respect to  \(S\T\) and \(I\).			
\end{lem}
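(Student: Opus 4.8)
The plan is to verify the two conditions HM1 and HM2 separately; both follow quickly from the defining data of the free symmetric HDA together with the fact that \(\Q\) is an HDA model of \(\T\) with respect to \(\ltimes\). The lemmas on \(\vartheta\) and \(\phi\) are not needed here; only the definitions of \(S\Q\) and \(S\T\) and conditions HM1, HM2 for \(\Q\) enter.

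For HM1, the key observation is that the free symmetric construction leaves the \(1\)-skeleton unchanged, because \(S_0\) and \(S_1\) are singletons. Since \(\Q\) is an HDA model of \(\T\), we have \(\Q_{\leq 1} = \T\), i.e., \(Q_0 = X_0\), \(Q_1 = X_1\), initial state \(\imath\), final states \(F\), and labeling \(\lambda\). I would then compute \((SQ)_0 = S_0 \times Q_0 = S_0 \times X_0 = (SX)_0\) and \((SQ)_1 = S_1 \times Q_1 = S_1 \times X_1 = (SX)_1\), and check that the face maps \(d^k_1(id,x) = (id_\emptyset, d^k_1 x)\) of \(SQ\) agree with those induced from \(\T\). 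Since \(\T\) is \(1\)-truncated, \(SX\) is concentrated in degrees \(0\) and \(1\), so these identifications give \((S\Q)_{\leq 1} = S\T\) as precubical sets. The remaining data match by definition of \(S\Q = (SQ, (id,\imath), S_0 \times F, \mu)\): the initial state \((id,\imath)\), the final-state set \(S_0 \times F\), and the labeling \(\mu\) are exactly those of \(S\T\).

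For HM2, let \(z = (\theta, x) \in (SQ)_2 = S_2 \times Q_2\); I must show \(\mu(d^0_2 z) \, I \, \mu(d^0_1 z)\). Using the face-map formula \(d^k_i(\theta,x) = (d_i\theta, d^k_{\theta^{-1}(i)}x)\) and the fact that \(d_1\theta = d_2\theta = id \in S_1\), I compute \(d^0_i z = (id, d^0_{\theta^{-1}(i)} x)\), whence \(\mu(d^0_i z) = \lambda(d^0_{\theta^{-1}(i)} x)\) for \(i \in \{1,2\}\). Because \(\theta \in S_2\), the pair \((\theta^{-1}(1), \theta^{-1}(2))\) is either \((1,2)\) or \((2,1)\), so in either case the unordered pair \(\{\mu(d^0_1 z), \mu(d^0_2 z)\}\) equals \(\{\lambda(d^0_1 x), \lambda(d^0_2 x)\}\). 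Now HM2 for \(\Q\) gives \(\lambda(d^0_2 x) \ltimes \lambda(d^0_1 x)\), and since \(I\) is the symmetric closure of \(\ltimes\), the two labels \(\lambda(d^0_1 x)\) and \(\lambda(d^0_2 x)\) are \(I\)-related in both orders. Hence \(\mu(d^0_2 z) \, I \, \mu(d^0_1 z)\) regardless of which permutation \(\theta\) is.

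The argument is essentially bookkeeping, so I anticipate no serious obstacle; the only point requiring care is HM2 when \(\theta\) is the nonidentity transposition, where the two faces of \(z\) carry the labels of the faces of \(x\) in swapped order. This is precisely where the \emph{symmetry} of \(I\), as opposed to the merely asymmetric relation \(\ltimes\), is used, and it is the structural reason why \(S\Q\) is an HDA model with respect to \(I\) rather than \(\ltimes\).
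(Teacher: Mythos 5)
Your proof is correct and takes essentially the same route as the paper's: HM1 via the identification \((SQ)_{\leq 1} = S(Q_{\leq 1}) = SX\) (using that \(S_0\) and \(S_1\) are trivial), and HM2 via the face-map formula \(d^0_i(\theta,x) = (id, d^0_{\theta^{-1}(i)}x)\), the case split over the two elements of \(S_2\), condition HM2 for \(\Q\), and the fact that \(I\) is the symmetric closure of \(\ltimes\). There is nothing to add.
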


\begin{proof}
	HM1: We have 
	\begin{align*}
		(S\Q)_{\leq 1} &= ((SQ)_{\leq 1}, (id,\imath),S_0\times F, \mu)
		= (SQ_{\leq 1}, (id,\imath),S_0\times F, \mu)\\
		&= (SX, (id,\imath),S_0\times F, \mu) 
		= S\T.
	\end{align*}

	HM2: Let \((\theta, x)\in ({SQ})_2 = S_2\times Q_2\). For \(i \in \{1,2\}\), \[\mu(d^0_i(\theta,x)) = \mu(d_i\theta,d^0_{\theta^{-1}(i)}x) = \mu(id,d^0_{\theta^{-1}(i)}x) = \lambda(d^0_{\theta^{-1}(i)}x). \]
	Thus if \(\theta = id\), 
	\[\mu(d^0_2(\theta,x)) = \lambda(d^0_{2}x) \ltimes \lambda(d^0_{1}x) = \mu(d^0_1(\theta,x)). \]
	If \(\theta\) is the transposition \((2 \;\, 1)\), 
	\[\mu(d^0_1(\theta,x)) = \lambda(d^0_{2}x) \ltimes \lambda(d^0_{1}x) = \mu(d^0_2(\theta,x)). \]
	In both cases, \(\mu(d^0_2(\theta,x)) \,I\, \mu(d^0_1(\theta,x))\).
\end{proof}

\begin{lem} \label{vartheta}
	 In \(S\Q\), \(\vartheta_{(\theta,x)} = \theta\) for all elements \((\theta,x )\). 		
\end{lem}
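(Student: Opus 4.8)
The plan is to show that the permutation $\theta$ itself satisfies the defining property of $\vartheta_{(\theta,x)}$ recorded in Lemma \ref{theta}, and then to invoke the uniqueness part of that lemma. First I would observe that, by Lemma \ref{HM12}, $S\Q$ satisfies HM1 and HM2 with respect to $S\T$ and $I$, so Lemma \ref{theta} does apply to $S\Q$ and $\vartheta_{(\theta,x)}$ is the \emph{unique} permutation in $S_n$ with $\mu(e_{\vartheta_{(\theta,x)}(i)}(\theta,x)) \ltimes \mu(e_{\vartheta_{(\theta,x)}(j)}(\theta,x))$ for all $1 \le i < j \le n$. For $n \le 1$ there is nothing to prove, since then $\theta = id = \vartheta_{(\theta,x)}$ by convention. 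Thus it suffices to verify, for $n \ge 2$, that
\[\mu(e_{\theta(i)}(\theta,x)) \ltimes \mu(e_{\theta(j)}(\theta,x)) \qquad (1 \le i < j \le n).\]

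The technical heart of the argument is the identity
\[\mu(e_r(\theta,x)) = \lambda(e_{\theta^{-1}(r)}x) \qquad (r \in \{1, \dots, n\}).\]
To establish it I would use the simple cubical dimap $f\colon S\Q \to \Q$ built in the proof of Theorem \ref{symcubical}, which satisfies $f([(\theta,x),u]) = [x, t_\theta(u)]$ and $f(id,y) = y$ on edges. Writing $\iota_r\colon [0,1] \to [0,1]^n$ for the inclusion of the $r$-th coordinate edge, the face identifications of $|SP|$ give $[e_r(\theta,x), s] = [(\theta,x), \iota_r(s)]$, and hence
\[f([e_r(\theta,x), s]) = [x, t_\theta(\iota_r(s))] = [x, \iota_{\theta^{-1}(r)}(s)] = [e_{\theta^{-1}(r)}x, s].\]
By the uniqueness of the data determining a simple cubical dimap on a cube (\cite[Prop.~6.2.4]{labels}), this means precisely that $f(e_r(\theta,x)) = e_{\theta^{-1}(r)}x$. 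Since $f$ preserves edge labels, $\mu(e_r(\theta,x)) = \lambda(f(e_r(\theta,x))) = \lambda(e_{\theta^{-1}(r)}x)$, which is the claimed identity.

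Granting this identity, the proof finishes quickly. Substituting $r = \theta(i)$ gives $\mu(e_{\theta(i)}(\theta,x)) = \lambda(e_i x)$. Because $\Q$ is an HDA model of $\T$ with respect to $\ltimes$, it satisfies HM1 and HM2 with respect to $\T$ and $\ltimes$, so Proposition \ref{prop47} yields $\lambda(e_i x) \ltimes \lambda(e_j x)$ for all $1 \le i < j \le n$. Therefore $\mu(e_{\theta(i)}(\theta,x)) \ltimes \mu(e_{\theta(j)}(\theta,x))$ for $i < j$, and the uniqueness clause of Lemma \ref{theta} forces $\vartheta_{(\theta,x)} = \theta$.

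The main obstacle is the starting-edge identity $\mu(e_r(\theta,x)) = \lambda(e_{\theta^{-1}(r)}x)$; everything else is bookkeeping. Alternatively, one could prove the underlying combinatorial statement $e_r(\theta,x) = (id, e_{\theta^{-1}(r)}x)$ directly by induction on $n$, peeling off one face map at a time and combining the formula $d^0_i(\theta,x) = (d_i\theta, d^0_{\theta^{-1}(i)}x)$ with the recursions relating $e_r$ in dimension $n$ to the starting edges of faces in dimension $n-1$. The delicate point there is tracking how the index $r$ and the permutation $\theta$ transform under successive faces, and routing the computation through the dimap $f$ (together with Lemma \ref{deltalem}, which is exactly what makes $f$ well defined) sidesteps most of that tracking.
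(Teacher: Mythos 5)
Your proof is correct, and its overall skeleton coincides with the paper's: reduce to checking that \(\theta\) satisfies the defining chain \(\mu(e_{\theta(i)}(\theta,x)) \ltimes \mu(e_{\theta(j)}(\theta,x))\) for \(i<j\), obtain \(\lambda(e_ix)\ltimes\lambda(e_jx)\) from Proposition \ref{prop47} applied to \(\Q\) with \(\ltimes\), and conclude by the uniqueness clause of Lemma \ref{theta} (the appeal to Lemma \ref{HM12} to license the use of Lemma \ref{theta} for \(S\Q\) is implicit in the paper, and you are right to make it explicit). Where you genuinely diverge is the technical heart, the starting-edge identity \(\mu(e_r(\theta,x)) = \lambda(e_{\theta^{-1}(r)}x)\): the paper simply cites the combinatorial formula \(e_i(\theta,x) = (id, e_{\theta^{-1}(i)}x)\) from \cite[Prop.\ 4.3]{SHDA}, whereas you rederive the needed consequence geometrically, pushing the coordinate edge \(\iota_r\) through the dimap \(f\) of Theorem \ref{symcubical} via \(t_\theta\circ\iota_r = \iota_{\theta^{-1}(r)}\) and invoking the uniqueness statement of \cite[Prop.\ 6.2.4]{labels} together with label preservation. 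Your computation checks out (including the identification \([e_ry,s]=[y,\iota_r(s)]\), which follows from iterating the face identifications), and your route has the merit of being self-contained within this paper's machinery rather than importing a result from \cite{SHDA}; in fact, since every edge of \(SQ\) has the form \((id,z)\) and \(f(\sigma,y)=y\), your argument even recovers the full combinatorial identity of \cite[Prop.\ 4.3]{SHDA}, not just its label shadow. The price is a detour through geometric realizations for what is, in the paper, a purely combinatorial one-liner; your closing remark that a direct induction on faces is possible but bookkeeping-heavy is a fair description of why the citation (or your dimap trick) is preferable.
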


\begin{proof}
	Let \((\theta, x) \in (SQ)_n\). We may suppose that \(n \geq 2\). By \cite[Prop. 4.3]{SHDA}, \(e_i(\theta,x) = (id,e_{\theta^{-1}(i)}x)\). Hence 
	\[\mu(e_i(\theta,x)) = \mu(id,e_{\theta^{-1}(i)}x) = \lambda(e_{\theta^{-1}(i)}x).\]
	Thus, \(\mu(e_{\theta(i)}(\theta,x)) = \lambda(e_ix)\).   By Proposition \ref{prop47},
	\[\mu(e_{\theta(i)}(\theta,x)) = \lambda(e_ix) \ltimes \lambda(e_jx) = \mu(e_{\theta(j)}(\theta,x))\]	
	for all \(1 \leq i < j \leq n\). Consequently, by Lemma \ref{theta}, \(\vartheta_{(\theta,x)} = \theta\). 
\end{proof}

\begin{prop} \label{symHDAmodel}
	 \(S\Q\) is an HDA model of \(S\T\) with respect to \(I\).
\end{prop}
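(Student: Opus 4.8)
The plan is to verify the four defining conditions HM1-HM4 of an HDA model. Conditions HM1 and HM2 are already settled by Lemma \ref{HM12}, so the remaining work is HM3 and HM4. Both will draw on the fact that $S\Q$, and more generally any HDA $\B$ with $S\Q \subseteq \B$ satisfying HM1 and HM2, carries the permutation assignment $\vartheta$ of Lemma \ref{theta} and the graded map $\phi$ of Lemma \ref{phi}.

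For HM3, I would take $(\sigma, x), (\tau, y) \in (SQ)_m$ with $m \geq 2$ and identical faces and first pin down the permutation components. Since each starting edge $e_i(\sigma,x)$ is an iterated $d^0$-face, the two cubes share all starting edges and hence all values $\mu(e_i(\sigma,x)) = \mu(e_i(\tau,y))$; as $\vartheta$ is determined by these labels (Lemma \ref{theta}), it follows that $\vartheta_{(\sigma,x)} = \vartheta_{(\tau,y)}$, and Lemma \ref{vartheta} then gives $\sigma = \tau$. With the permutations equal, the face identities collapse to $d^k_i x = d^k_i y$ for all $i$ and $k$, and HM3 for $\Q$ (valid because $\Q$ is an HDA model with respect to $\ltimes$) forces $x = y$.

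The substantial step is HM4. Given $\B = (B, (id,\imath), S_0\times F, \mu)$ satisfying HM1-HM3 with $S\Q \subseteq \B$, I would show $B = SQ$ by assembling the maps above into $\Psi\colon B \to SQ$, $\Psi(b) = (\vartheta_b, \phi(b))$. First I would check that $\Psi$ is a morphism of precubical sets: the first coordinate is handled by the morphism property of $\vartheta$, while the second follows by reading the defining relation $d^k_i\phi(b) = \phi(d^k_{\vartheta_b(i)}b)$ of Lemma \ref{phi} in the form $d^k_{\vartheta_b^{-1}(i)}\phi(b) = \phi(d^k_i b)$, which matches the face map $d^k_i(\theta,z) = (d_i\theta, d^k_{\theta^{-1}(i)}z)$ of $SQ$. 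Next I would verify that $\Psi$ restricts to the identity on $SQ$: by Lemma \ref{vartheta} the first coordinate of $\Psi(\theta,x)$ is $\theta$, and an induction on degree using the recursion for $\phi$ together with HM3 for $\Q$ shows $\phi(\theta,x) = x$, so $\Psi(\theta,x) = (\theta,x)$.

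Finally I would prove $\Psi$ injective by induction on degree. In degrees $\leq 1$ this is immediate since $\phi_{\leq 1}$ is an isomorphism. For $b, b' \in B_n$ with $\Psi(b) = \Psi(b')$, applying the morphism $\Psi$ and the inductive injectivity in degree $n-1$ gives $d^k_i b = d^k_i b'$ for all $i$ and $k$, whence HM3 for $\B$ yields $b = b'$. Injectivity together with $\Psi|_{SQ} = id$ then closes the argument: for any $b \in B$ we have $\Psi(b) \in SQ$, so $\Psi(\Psi(b)) = \Psi(b)$, and injectivity gives $b = \Psi(b) \in SQ$; hence $B \subseteq SQ$, and with $SQ \subseteq B$ we conclude $B = SQ$. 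I expect the main obstacle to be precisely this maximality condition HM4, and within it the careful tracking of the permutation component through the face maps needed to confirm that $\Psi$ is a precubical morphism and that the injectivity induction closes; Lemmas \ref{theta}, \ref{phi}, and \ref{vartheta} are exactly the tools that make this bookkeeping manageable.
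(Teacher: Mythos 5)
Your proof is correct and follows the paper's overall skeleton (HM1--HM2 from Lemma \ref{HM12}, then HM3 and HM4 via the maps \(\vartheta\) and \(\phi\)), but it deviates in two local ways worth recording. For HM3, the paper argues by cases: for \(m \geq 3\) it invokes Proposition \ref{ddeterminism} (permutations of degree \(\geq 3\) are determined by their faces), and for \(m = 2\), where that proposition is unavailable, it derives both \(\lambda(d^0_2y) \ltimes \lambda(d^0_1y)\) and \(\lambda(d^0_1y) \ltimes \lambda(d^0_2y)\) and contradicts the asymmetry of \(\ltimes\). You instead recover \(\sigma = \tau\) uniformly for all \(m \geq 2\): since starting edges are iterated \(d^0\)-faces, cubes with identical faces have identical starting-edge labels, so the uniqueness in Lemma \ref{theta} combined with Lemma \ref{vartheta} pins down the permutation component in one stroke. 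This is valid and makes Proposition \ref{ddeterminism} unnecessary for this proposition, at the price of routing the \(m \geq 3\) case through Lemma \ref{vartheta} (and hence Proposition \ref{prop47}) where the paper's argument is purely combinatorial. For HM4, the computational core is identical---the identity \(d^k_i(\vartheta_b, \phi(b)) = (\vartheta_{d^k_ib}, \phi(d^k_ib))\) followed by an appeal to HM3 for \(\B\)---but the paper runs a direct induction showing \(B_m = (SQ)_m\), concluding from the fact that \(b\) and \((\vartheta_b,\phi(b)) \in (SQ)_m\) have the same faces, whereas you package the same identities into a precubical morphism \(\Psi \colon B \to SQ\), prove it injective and the identity on \(SQ\), and finish with the retraction argument \(b = \Psi(b)\); this is a clean repackaging rather than a new idea. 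One point you should make explicit, as the paper does: when you apply Lemma \ref{vartheta} to the first coordinate of \(\Psi(\theta,x)\), the permutation \(\vartheta_{(\theta,x)}\) in the definition of \(\Psi\) is computed in \(\B\), not in \(S\Q\); the two coincide on \(SQ\) because \(\vartheta_b\) is determined solely by the labels of the starting edges of \(b\), which are unchanged in the sub-HDA \(S\Q \subseteq \B\). With that remark added, your inductive verification that \(\phi(\theta,x) = x\) (via HM3 for \(\Q\)) is a sound substitute for the paper's appeal to the uniqueness clause of Lemma \ref{phi}.
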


\begin{proof}
	By Lemma \ref{HM12}, we only have to show HM3 and HM4.
	
\begin{sloppypar}	
	HM3: Let \(m \geq 2\), and let \({(\sigma, x), (\theta, y) \in (SQ)_m = S_m \times Q_m}\) such that \(d^k_r(\sigma, x) = d^k_r(\theta, y)\) for all \(r \in \{1, \dots, m\}\) and \(k \in \{0,1\}\). Then \((d_r\sigma, d^k_{\sigma^{-1}(r)}x) = (d_r\theta, d^k_{\theta^{-1}(r)}y)\) for all \(r\) and \(k\). If \(m \geq 3\), this implies \((\sigma, x) = (\theta, y)\) by Proposition \ref{ddeterminism}. In the case \(m = 2\), it is enough to show that \(\sigma = \theta\). Suppose that this is not the case. Then we may assume that \(\sigma = id\) and \(\theta = (2 \;\,1)\). But then
	\[\lambda (d^0_1y) = \lambda (d^0_2x) \ltimes \lambda (d^0_1x) = \lambda (d^0_2y),\] 	
	which is impossible because \(\ltimes\) is asymmetric. 
\end{sloppypar}	

	HM4: Suppose that \(\B = (B, (id, \imath), S_0\times F, \mu)\) is an HDA satisfying conditions HM1--HM3 with respect to \(S\T\) and \(I\) that contains \(S\Q\) as a sub-HDA. We have to show that \(B = SQ\).  Since \(S\Q \subseteq \B\), the maps \(\vartheta \colon B \to S\) and \(\vartheta \colon SQ \to S\) coincide on \(SQ\). Let \(\phi \colon B \to Q\) be the map of graded sets of Lemma \ref{phi}. The corresponding map for \(S\Q\) is the map \(\psi \colon SQ \to Q\) given by \(\psi(\theta,x) = x\). Indeed, by Lemma \ref{vartheta}, \(d^k_i\psi(\theta,x) = d^k_i x = \psi(d_{\theta(i)}\theta,d^k_ix) = \psi(d_{\theta(i)}\theta,d^k_{\theta^{-1}(\theta(i))}x) = \psi(d^k_{\theta(i)}(\theta,x)) =
	\psi(d^k_{\vartheta_{(\theta,x)}(i)}(\theta,x))\). Since \(S\Q \subseteq \B\), the restriction of \(\phi\) to \(SQ\) is \(\psi\). 
	
	By HM1, \(B_{\leq 1} = (SQ)_{\leq 1}\). Let \(m \geq 2\), and suppose inductively that \(B_{<m} = (SQ)_{<m}\). Let \(b \in B_m\). By the inductive hypothesis, \(d^k_ib \in (SQ)_{m-1}\). Write \(d^k_ib = (\theta,x)\). We have 
	\begin{align*}
		d^k_i(\vartheta_b, \phi(b)) &= (d_i\vartheta_b, d^k_{\vartheta^{-1}_b(i)}\phi(b))
		= (\vartheta_{d^k_ib}, \phi(d^k_{i}b))
		= (\vartheta_{d^k_ib}, \psi(d^k_{i}b))\\
		&= (\vartheta_{(\theta,x)}, \psi(\theta,x))
		= (\theta,x)
		= d^k_ib.
	\end{align*}
	By HM3, it follows that \(b = (\vartheta_b, \phi(b)) \in (SQ)_m\).
\end{proof}

\providecommand{\bysame}{\leavevmode\hbox to3em{\hrulefill}\thinspace}
\providecommand{\MR}{\relax\ifhmode\unskip\space\fi MR }
\providecommand{\MRhref}[2]{%
  \href{http://www.ams.org/mathscinet-getitem?mr=#1}{#2}
}
\providecommand{\href}[2]{#2}

\end{document}